\theoremstyle{definition}
\newtheorem{prop}{Proposition}[section]
\newtheorem{lthm}[prop]{Theorem}
\newtheorem{lem}[prop]{Lemma}
\newtheorem{cor}[prop]{Corollary}
\newtheorem{rem}[prop]{Remark}
\newtheorem{eg}[prop]{Example}
\newtheorem{dfn}[prop]{Definition}
\newtheorem{obs}[prop]{Observation}
\newcommand{\dft}[1]{\textbf{\textit{#1}}}
\newcommand{\abs}[1]{\left|#1\right|}
\newcommand{\paren}[1]{\left(#1\right)}
\newcommand{\set}[1]{\left\{#1\right\}}
\newcommand{\sucht}{\,\middle|\,}
\newcommand{\N}{\mathbf{N}}
\newcommand{\bfP}{\mathbf{P}}
\newcommand{\NP}{\mathrm{NP}}
\newcommand{\Poly}{\mathrm{P}}
\newcommand{\shP}{\mathrm{\#P}}
\newcommand{\CountCliques}{\mathsf{CountCliques}}
\newcommand{\CountExtensions}{\mathsf{CountExtensions}}
\newcommand{\CountValidLabelings}{\mathsf{CountValidLabelings}}
\newcommand{\SampleLabeling}{\mathsf{SampleLabeling}}
\DeclareMathOperator{\pw}{pw}
\DeclareMathOperator{\width}{width}
\DeclareMathOperator{\range}{range}
\newcommand{\calD}{\mathcal{D}}
\newcommand{\calI}{\mathcal{I}}
\newcommand{\calL}{\mathcal{L}}
\newcommand{\calX}{\mathcal{X}}
\newcommand{\calY}{\mathcal{Y}}
\title{Simple Counting and Sampling Algorithms for Graphs with Bounded Pathwidth}
\author{
  Christine T.\ Cheng\\
  University of Wisconsin, Milwaukee\\
  ccheng@uwm.edu
  \and
  Will Rosenbaum\\
  Amherst College\\
  wrosenbaum@amherst.edu
}      
\date{\today}
\begin{document}

\maketitle
\thispagestyle{empty}

\begin{abstract}
  In this paper, we consider the problem of counting and sampling structures in graphs. We define a class of ``edge universal labeling problems''---which include proper $k$-colorings, independent sets, and downsets---and describe simple algorithms for counting and uniformly sampling valid labelings of graphs, assuming a path decomposition is given. Thus, we show that several well-studied counting and sampling problems are fixed parameter tractable (FPT) when parameterized by the pathwidth of the input graph. We discuss connections to counting and sampling problems for distributive lattices and, in particular, we give a new FPT algorithm for exactly counting and uniformly sampling stable matchings.
\end{abstract}

\section{Introduction}
\label{sec:introduction}

Pioneering work in computational complexity sought to separate computational problems broadly into ``easy'' and ``hard'' problems, for example separating $\NP$-hard problems from those in $\Poly$. As many problems that we would like to solve in practice are ($\NP$-)hard, researchers were led towards understanding which \emph{instances} of a problem are hard. Are there families of instances on which a hard problem becomes tractable? How does the \emph{structure} of an input affect the problem's complexity? This line of inquiry lead to the study of \emph{parameterized complexity}, which seeks to understand how various parameters of an input determine the efficiency with which a computational task can be performed.

Two landmark results in parameterized complexity are due to Courcelle~\cite{Courcelle1990-monadic} and Bodlaender~\cite{Bodlaender1996-linear-time}, respectively. In~\cite{Courcelle1990-monadic}, Courcelle proves a large class of (decision) problems on graphs---those definable in monadic second order logic---can be solved in linear time, assuming a ``tree decomposition of bounded width'' of the input is provided. Later, Bodlaender~\cite{Bodlaender1996-linear-time} provided an algorithm that \emph{finds} a tree decomposition of width $k$ of any graph $G$ in time $O(f(k) n)$, or reports that no such decomposition exists. Together, the results of Courcelle and Bodlaender show that a huge swath of graph problems are fixed parameter tractable (FPT) when parameterized by the tree-width of the graph (i.e., the minimal $k$ for which there exists a tree decomposition of width $k$).

Courcelle's result focused on decision problems, but Arnborg, Lagergren, and Seese in~\cite{Arnborg1991-easy} extended the results to counting problems as well. We refer the reader to~\cite{Curticapean2019-counting} for a recent survey of work on the parameterized complexity of counting. The results of Courcelle and Arnborg et al.\ are general, deep, and beautiful. Yet the generality of these results comes at the cost of accessibility. 

In this paper, we describe FPT algorithms for counting and sampling four familiar and well-studied graph structures: cliques, proper $c$-colorings, independent sets, and downsets. (The latter three structures are examples of ``edge universal labeling problems''---cf.\ Definition~\ref{dfn:edge-universal}---a class of problems for which we describe a generic-yet-simple algorithm.) Counting cliques, $c$-colorings, independent sets, and downsets are all known to be $\shP$-complete (see, e.g.,~\cite{Dyer2004-relative} and references therein), hence $\NP$-hard. Given a path decomposition of a graph, our algorithms employ a straightforward application of dynamic programming to exactly count and uniformly sample structures in a graph. Thus, our results give non-trivial examples of hard counting problems that can be solved exactly in near-linear time for restricted graph families. We believe the algorithms and concepts are simple enough to be accessible to students in an undergraduate algorithms course.

We discuss applications of our algorithms to counting and sampling in distributive lattices, and in particular, for the stable marriage problem. It is well-known that the set of stable matchings for an instance forms a distributive lattice. The stable matchings can be succinctly represented as the family of downsets in the ``rotation digraph'' of an instance, which itself can be computed in near-linear time. In a companion paper~\cite{Cheng2020-stable}, we show that for a natural parameterization---the ``$k$-range model'' introduced by Bhatnagar et al~\cite{Bhatnagar2008-sampling}---instances have rotation digraphs whose pathwidths are bounded by a function of range of the instance, and that a path decomposition of the rotation digraph can be computed efficiently from the input. Combined with the structural result in~\cite{Cheng2020-stable}, our algorithms for counting and sampling downsets give FPT algorithms for counting and sampling stable matchings parameterized by the ``range'' of the input. Specifically, for any fixed constant $k$ and $k$-range stable marriage instance, it is possible to exactly count and uniformly sample stable matchings in linear time (Corollary~\ref{cor:sm}). This result is in contrast to the work of Bhatnagar et al.~\cite{Bhatnagar2008-sampling}, who showed that a natural Markov chain Monte Carlo approach to sampling stable matchings requires exponential time in the $k$-range model for any $k \geq 5$.


\section{Pathwidth and Cliques}
\label{sec:pathwidth}

Throughout the paper, we let $G = (V, E)$ denote a directed or undirected graph. We use $uv$ to denote a directed or undirected edge between $u$ and $v$. We begin by briefly reviewing some fundamental results regarding pathwidth. The section concludes with extremely simple algorithms for counting and sampling cliques in a graph.

\begin{dfn}
  \label{dfn:path-decomp}
  A \dft{path decomposition} of graph $G = (V, E)$ is a sequence $(X_1, X_2, \ldots, X_r)$ of subsets of $V$ such that:
  \begin{enumerate}
  \item $\bigcup_{i = 1}^r X_i = V$,
  \item for each edge $uv \in E$, there exists $i \in [r]$ such that $u, v \in X_i$,
  \item for all $i, j, k \in [r]$ with $i \leq j \leq k$, we have $X_i \cap X_k \subseteq X_j$.
  \end{enumerate}
  The \dft{width} of the path decomposition is $\width(\calX) = \max_{i} \abs{X_i} - 1$. The \dft{pathwidth} of $G$, denoted $\pw(G)$, is the minimum width over all path decompositions of $G$.
\end{dfn}

We extend the definition of pathwidths to directed graphs. 

\begin{dfn}
  \label{dfn:pathwidth-p}
  Let $H$ be a directed graph. The \dft{pathdwidth of $H$} is the pathwidth of the undirected version of $H$---that is, the undirected graph formed by replacing each directed edge in $H$ with an undirected edge with the same endpoints. 
\end{dfn}

\begin{rem}
  \label{rem:pathwidth-interval}
  Suppose $(X_1, X_2, \ldots, X_r)$ is a path decomposition of $G$. Item~3 above implies that for each vertex $v$, there is an interval $I_v \subseteq [r]$ such that $v \in X_i$ if and only if $i \in I_v$. By item~2, if $uv \in E$, then we must have $I_u \cap I_v \neq \varnothing$. Thus, $G$ is a subgraph of the interval graph\footnote{Recall that an \dft{interval graph} on a family $\calI$ of intervals is the graph $G = (\calI, E)$ where $\set{I, J} \in E$ if and only if $I \cap J \neq \varnothing$.} defined by the intervals $\set{I_v \sucht v \in V}$. For $I_v = [i_v, j_v]$, we say that $v$ is \dft{added} to the decomposition at index $i_v$, and \dft{removed} at index $j_v + 1$. 
\end{rem}

\begin{dfn}
  \label{dfn:nice-path} 
  Let $\calX = (X_1, X_2, \ldots, X_r)$ be a path decomposition of graph $G$. We say that $\calX$ is a \dft{nice path decomposition} if $\abs{X_1} =  1$, $\abs{X_r} = 0$ and for all $i \in [r - 1]$, we have $\abs{X_i\,\triangle\,X_{i+1}} = 1$. That is, when $\calX$ is nice, exactly one vertex is added or removed at each index.   
\end{dfn}

The (omitted) proof of the following lemma is straightforward.

\begin{lem}
  \label{lem:nice-path} Let $G$ be a graph with $n$ vertices. 
  Suppose $\calX = (X_1, X_2 \ldots, X_r)$ is a path decomposition of $G$ of width $k$. Then $G$ has a nice path decomposition $\calY = (Y_1, Y_2, \ldots, Y_{s})$ of width $k$ with $s = 2n$.    Moreover, $\calY$ can be computed from $\calX$ in time $O(k n)$. 
\end{lem}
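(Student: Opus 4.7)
My plan is to convert $\calX$ into $\calY$ by a sequence of local refinements, each preserving validity of the path decomposition and not increasing the width. The two ingredients are (i) \emph{smoothing} each transition $X_i \to X_{i+1}$ so that consecutive bags differ by exactly one vertex, and (ii) extending the endpoints so that $\abs{Y_1} = 1$ and $\abs{Y_s} = 0$.

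First, I would delete any consecutive duplicate bags from $\calX$. Since each surviving transition changes at least one vertex, and each vertex can be added and removed at most once in total by the interval property, the remaining sequence has at most $2n + 1$ bags. Next, for each consecutive pair $(X_i, X_{i+1})$, let $R_i = X_i \setminus X_{i+1}$ and $A_i = X_{i+1} \setminus X_i$, and insert a sequence of intermediate bags that first remove the vertices of $R_i$ one at a time and then add the vertices of $A_i$ one at a time (order within each group is arbitrary). Performing removals before additions is essential: every intermediate bag is contained in $X_i \cup X_{i+1}$, so its size is at most $\max(\abs{X_i}, \abs{X_{i+1}}) \le k + 1$, and the width is preserved.

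After smoothing, I would fix the endpoints: if $\abs{X_1} > 1$, prepend $\abs{X_1} - 1$ bags that build up from a singleton to $X_1$ one vertex at a time; if $\abs{X_r} > 0$, append $\abs{X_r}$ bags that shrink $X_r$ to $\varnothing$ one vertex at a time. Let $\calY = (Y_1, \ldots, Y_s)$ denote the resulting sequence. I would then verify the three axioms of Definition~\ref{dfn:path-decomp}: the union and edge conditions are immediate because every original $X_i$ still appears, and the interval property follows because each inserted bag is a subset of two consecutive original bags, so every vertex enters and leaves $\calY$ exactly once. For the count, each of the $n$ vertices is added exactly once and removed exactly once, which together with $\abs{Y_1}=1$ gives $n-1$ add-transitions and $n$ remove-transitions, i.e., $s - 1 = 2n - 1$ and hence $s = 2n$. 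The runtime bound $O(kn)$ follows because each new bag is an $O(k)$ update of its predecessor and there are $O(n)$ bags overall.

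The main obstacle is a design choice rather than a genuine difficulty: smoothing in the wrong order---adding elements of $A_i$ before removing those of $R_i$---can temporarily inflate a bag to size $\abs{X_i} + \abs{A_i}$, which may exceed $k + 1$ and violate the width constraint. Once one commits to ``remove first, then add,'' the rest is routine verification of the path decomposition axioms and an accounting of add/remove events.
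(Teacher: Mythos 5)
Your construction is correct and is exactly the standard ``smoothing'' argument that the paper deems straightforward and omits: delete duplicate bags, interpolate between consecutive bags by removing $X_i\setminus X_{i+1}$ before adding $X_{i+1}\setminus X_i$ (the remove-first ordering being precisely what keeps every intermediate bag inside one of the two original bags, hence of size at most $k+1$), and pad the two ends, after which the add/remove accounting gives $s=2n$ and the $O(kn)$ bound. No substantive difference from the intended proof; the only loose ends are trivial edge cases (e.g., discarding an empty leading bag so that $\abs{Y_1}=1$), which do not affect correctness.
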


The following seminal result of Bodlaender shows that computing the pathwidth and optimal path decompositions of a graph is fixed parameter tractable.

\begin{lthm}[Bodlaender~\cite{Bodlaender1996-linear-time}]
  \label{thm:pathwidth}
  Let $G$ be a graph and let $k \in \N$ be a constant.  There is an algorithm that decides whether $\pw(G) \leq k$ in $O(f(k)\abs{G})$ time. If $\pw(G) \leq k$, then the algorithm outputs a path decomposition $\calX$ of $G$ of width $k$.
\end{lthm}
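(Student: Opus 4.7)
The plan is to follow a bootstrapping strategy of the sort Bodlaender introduced: the algorithm runs in two conceptual phases, where the first quickly produces a crude path decomposition and the second uses that crude decomposition to run an exact dynamic program. Since the target runtime is $O(f(k)\lvert G\rvert)$, I can afford only a constant (in $k$) number of passes over the graph per level of recursion, so every step must be linear-time in $\lvert G\rvert$.

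For the approximation phase, I would show that any graph $G$ with $\pw(G) \leq k$ contains linearly many vertices of degree at most $g(k)$, and that among these there is a linear-sized set of pairwise ``mergeable'' vertices whose contraction yields a minor $G'$ with $\pw(G') \leq k$ and $\lvert V(G')\rvert \leq c \lvert V(G)\rvert$ for some $c < 1$. Recursing on $G'$, I obtain a path decomposition of $G'$ in time $T(cn)$, and then I lift it to a decomposition of $G$ by expanding each contracted vertex, paying $O(f(k)n)$ per level; the geometric series gives total time $O(f(k)n)$. This yields a path decomposition of width bounded by some function $g(k)$, not necessarily optimal.

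For the exact phase, I would invoke a dynamic program on the approximate decomposition in the style of Bodlaender and Kloks. Sweeping left-to-right through the bags, the DP maintains, at each bag $X_i$, the set of ``characteristics'' of all possible partial path decompositions of the subgraph induced by $\bigcup_{j \le i} X_j$ that have width at most $k$. Using the technique of typical sequences, one bounds the number of essentially different characteristics by a function of $k$ alone. At the end of the sweep, the existence of a valid characteristic certifies $\pw(G) \leq k$ and allows the decomposition to be read off by backtracking. Because each bag has size $g(k)+1$ and the DP table has size bounded by some $f(k)$, each transition is done in $O(f(k))$, giving total time $O(f(k)n)$.

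The main obstacle is the approximation phase: one must simultaneously guarantee that (i) a constant fraction of vertices can be safely contracted without increasing the pathwidth of the remaining graph, (ii) these vertices can be found in linear time, and (iii) the lifting back to a decomposition of $G$ can be carried out locally and cheaply. This requires delicate structural arguments about low-pathwidth graphs, namely that they contain many ``simplicial-like'' vertices whose neighborhoods lie inside a common bag, and it is where all the technical difficulty of Bodlaender's argument concentrates. The exact DP phase, by contrast, is conceptually routine once one has the typical-sequences bound in hand.
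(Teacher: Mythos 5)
This statement is not proved in the paper at all: it is quoted verbatim as Bodlaender's theorem~\cite{Bodlaender1996-linear-time}, so the only meaningful comparison is with Bodlaender's original argument, whose broad outline (a self-improvement recursion that shrinks the graph, followed by the Bodlaender--Kloks dynamic program over characteristics and typical sequences) your proposal does correctly reproduce. As a proof, however, it has two genuine gaps. First, your structural claim in the approximation phase---that any graph with $\pw(G)\leq k$ contains a linear-sized set of pairwise mergeable low-degree vertices whose contraction shrinks the graph by a constant factor---is not true as stated. Bodlaender's argument needs a case distinction: either there are many ``friendly'' low-degree vertices with low-degree neighbors, in which case a large matching is contracted (and after lifting, the width roughly doubles before the exact phase repairs it), or there are few such vertices, in which case one must instead identify and \emph{remove} many ``$I$-simplicial'' vertices and argue that they can be reinserted into a decomposition of the reduced graph without exceeding width $k$. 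The second case cannot be folded into the contraction argument, and establishing that one of the two cases always applies when the width is at most $k$ is exactly the delicate counting argument you gesture at but do not carry out. Second, the exact phase is invoked as a black box: the definition of characteristics, the typical-sequences compression that bounds the table size by a function of $k$ alone, and the correctness of the transitions constitute the other half of the work (an entire companion paper of Bodlaender and Kloks), and calling it ``conceptually routine'' does not discharge it. So what you have is an accurate roadmap of the known proof, not a proof; if you only need the theorem for the applications in this paper, citing~\cite{Bodlaender1996-linear-time} as the authors do is the appropriate move.
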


The next corollary is immediate from Theorem~\ref{thm:pathwidth} and Lemma~\ref{lem:nice-path}.

\begin{cor}
  \label{cor:nice-path}
  For any graph $G$, a nice path decomposition of $G$ can be computed in time $O(f(k) \abs{G})$ where $k = \pw(G)$ and $f$ is some function depending only on $k$.
\end{cor}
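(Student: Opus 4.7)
The plan is a direct composition of the two preceding results; I expect no substantive obstacle, only minor bookkeeping. First I would invoke Theorem~\ref{thm:pathwidth} on $G$ to obtain a path decomposition $\calX$ of width exactly $k = \pw(G)$ in time $O(f_B(k)\abs{G})$, where $f_B$ denotes the function provided by Bodlaender's theorem. Then I would apply Lemma~\ref{lem:nice-path} to $\calX$ to convert it into a nice path decomposition $\calY$ of the same width $k$ at an additional cost of $O(k n) = O(k \abs{G})$. Adding the two bounds yields $O\bigl((f_B(k) + k)\abs{G}\bigr)$, which has the desired form $O(f(k)\abs{G})$.

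The one subtlety is that Theorem~\ref{thm:pathwidth} as stated takes a constant $k$ as part of its input, whereas the corollary asks for a running time parameterized by the unknown quantity $\pw(G)$. To sidestep this, I would simply run Bodlaender's algorithm for $k = 0, 1, 2, \ldots$ until it first succeeds, which must occur at $k = \pw(G)$. The total cost of this search is $\sum_{j = 0}^{k} O(f_B(j)\abs{G})$, which is still of the form $O(f'(k)\abs{G})$ for a function $f'$ depending only on $k$ (namely $f'(k) = \sum_{j = 0}^{k} f_B(j)$). Combining with the conversion step of Lemma~\ref{lem:nice-path} then gives the bound stated in the corollary with $f(k) = f'(k) + k$.
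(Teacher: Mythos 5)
Your proposal is correct and follows exactly the route the paper intends: the corollary is stated as immediate from Theorem~\ref{thm:pathwidth} and Lemma~\ref{lem:nice-path}, and your composition of the two (Bodlaender's algorithm followed by the nice-decomposition conversion) is that argument spelled out. Your extra step of iterating Bodlaender's algorithm over $k = 0, 1, 2, \ldots$ to find $\pw(G)$ is a sensible piece of bookkeeping that the paper leaves implicit, and it does not change the overall approach or the claimed bound.
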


\subsection{Counting and Sampling Cliques}


As a warm-up, we describe simple algorithms for counting and sampling cliques in a graph $G$, given a path decomposition $\calX = (X_1, X_2, \ldots, X_{2n})$ of $G$ of width $\pw$.

\begin{obs}
  \label{obs:clique}
  If $K \subseteq V$ is a clique in $G$, then there exists an index $i$ such that $K \subseteq X_i$. To see this, for each $v \in V$, let $I_v \subseteq [2n]$ denote the interval of indices $i$ such that $v \in X_i$. By Remark~\ref{rem:pathwidth-interval}, for every pair $v, w \in K$, $I_v$ and $I_w$ intersect. Therefore, the family of intervals $\{I_v, v \in K\}$ are mutually intersecting. That is, there exists $i \in \bigcap_{v \in K} I_v$ so that $K \subseteq X_i$.
\end{obs}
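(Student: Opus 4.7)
The plan is to reduce the claim to a one-dimensional Helly-type statement about intervals. By Remark~\ref{rem:pathwidth-interval}, each vertex $v \in V$ corresponds to an interval $I_v = [i_v, j_v] \subseteq [2n]$ such that $v \in X_i$ iff $i \in I_v$. Under this correspondence, the conclusion $K \subseteq X_i$ for some $i$ is equivalent to $\bigcap_{v \in K} I_v \neq \varnothing$. So the goal reduces to showing that the family $\set{I_v \sucht v \in K}$ has a common point.

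First I would establish pairwise intersection of the intervals in the family. For any two distinct vertices $u, v \in K$, the edge $uv$ lies in $E$ (since $K$ is a clique), so by property~2 of Definition~\ref{dfn:path-decomp} there exists an index $i$ with $u, v \in X_i$, i.e.\ $i \in I_u \cap I_v$. Hence $\set{I_v \sucht v \in K}$ is a pairwise-intersecting family of intervals on the line.

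The main (and only nontrivial) step is the Helly property for intervals on $\R$: any pairwise-intersecting family of closed intervals has nonempty total intersection. I would prove this in one line by setting $i^* = \max_{v \in K} i_v$ and verifying $i^* \in \bigcap_{v \in K} I_v$. Indeed, pick $w \in K$ with $i_w = i^*$; for every $v \in K$, pairwise intersection of $I_v$ and $I_w$ gives $i_v \le j_w$ and $i_w \le j_v$, so $i^* = i_w \le j_v$ and $i^* \ge i_v$, placing $i^*$ in $I_v$. Taking $i = i^*$ yields $K \subseteq X_{i^*}$, completing the proof. I do not anticipate a real obstacle here: the only substantive content is invoking (and quickly verifying) the one-dimensional Helly property, and the rest is repackaging the definition of a path decomposition through the interval viewpoint of Remark~\ref{rem:pathwidth-interval}.
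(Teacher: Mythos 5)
Your proposal is correct and follows essentially the same route as the paper: pass to the intervals $I_v$ via Remark~\ref{rem:pathwidth-interval}, note pairwise intersection from the edge condition of Definition~\ref{dfn:path-decomp}, and conclude a common index by the Helly property of intervals on the line. The only difference is that you explicitly verify the Helly step (taking $i^* = \max_v i_v$), which the paper invokes implicitly; that is a welcome bit of added rigor but not a different argument.
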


By Observation~\ref{obs:clique}, it is enough to count cliques in each (subgraph induced by) $X_i$. The only subtlety is that we must ensure that we only count each clique once (though it may be a subgraph of many $X_i$). To this end, we associate each clique $K$ with a single vertex $v \in K$---the last vertex in $K$ added in the path decomposition $\calX$. We describe the algorithm in pseudo-code below.

\begin{algorithm}
  \label{alg:count-cliques}
  \caption{$\CountCliques(G, \calX)$. $G = (V, E)$ is a graph, and $\calX = (X_1, X_2, \ldots, X_{2n})$ is a path decomposition of $G$ of width $\pw$. Let $I = \set{i_1, i_2, \ldots, i_n}$ be the set of indices at which vertices $v_1, v_2, \ldots, v_n$ (respectively) are inserted.}
  \begin{algorithmic}[1]
    \FORALL{$i \in I$}
    \STATE $C(v_i) \leftarrow$ number of cliques in $X_i$ containing $v_i$ \COMMENT{compute by brute force} \label{ln:clique-count}
    \ENDFOR
    \RETURN $\sum_{i = 1}^n C(v_i)$
  \end{algorithmic}
\end{algorithm}

\begin{lthm}
  \label{thm:count-cliques}
  Given a graph $G$ and a nice path decomposition $\calX$ of $G$ of width $\pw$, $\CountCliques(G, \calX)$ returns the number of cliques in $G$ in time $O(2^{\pw} \pw^2 n)$. Therefore, counting cliques can be done in FPT linear time, parameterized by the pathwidth of $G$.
\end{lthm}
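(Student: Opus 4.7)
The plan is to prove correctness by showing that Algorithm~\ref{alg:count-cliques} associates every clique $K$ of $G$ with a unique iteration of the main loop, namely the iteration at index $i^* = \max_{w \in K} i_w$, where $i_w$ denotes the index at which $w$ is added to $\calX$. Two things need to be verified: (a) $K$ is counted in $C(v_{i^*})$, and (b) $K$ contributes nothing to any other $C(v_i)$. Summing over all cliques then shows the returned value equals the number of cliques in $G$.

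For (a), let $v = v_{i^*}$ be the last-added vertex of $K$, so $v \in K$. It suffices to show $K \subseteq X_{i^*}$, since $C(v_{i^*})$ counts cliques in $X_{i^*}$ containing $v$. By Observation~\ref{obs:clique}, there exists $j$ with $K \subseteq X_j$. Since $v \in K \subseteq X_j$, we have $j \in I_v$, so $j \geq i^* = \min I_v$. Now fix any $w \in K$; by definition of $i^*$ we have $i_w \leq i^*$, and $w \in X_{i_w} \cap X_j$. Since $i_w \leq i^* \leq j$, item~3 of Definition~\ref{dfn:path-decomp} yields $w \in X_{i^*}$. Hence $K \subseteq X_{i^*}$, and $K$ is counted exactly once in $C(v_{i^*})$.

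For (b), suppose $v_i \in K$ with $i \neq i^*$. Then $v_i$ is not the last-added vertex of $K$, so $i < i^*$. The last-added vertex $v_{i^*}$ satisfies $v_{i^*} \in K$ but $i^* > i$, so $v_{i^*} \notin X_i$ (it has not yet been added). Therefore $K \not\subseteq X_i$, and $K$ does not contribute to $C(v_i)$.

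For the running time, a nice path decomposition satisfies $\abs{X_i} \leq \pw + 1$, so for each iteration there are at most $2^{\pw}$ subsets of $X_i$ containing $v_i$. Each candidate subset can be tested for being a clique in $O(\pw^2)$ time by checking all pairs of vertices. The loop runs over the $n$ addition indices, yielding total time $O(2^{\pw} \pw^2 n)$. Combining this with Corollary~\ref{cor:nice-path} to construct a nice path decomposition in $O(f(\pw) n)$ time gives the claimed FPT-linear algorithm. The main conceptual step is the containment argument in~(a): although Observation~\ref{obs:clique} yields some bag containing $K$, the algorithm commits to the specific bag $X_{i^*}$, and the interval property of path decompositions is precisely what is needed to migrate $K$ from an arbitrary witnessing bag into $X_{i^*}$.
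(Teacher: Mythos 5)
Your proof is correct and follows essentially the same approach as the paper: it charges each clique to its last-added vertex, argues it is counted exactly once, and bounds the brute-force enumeration per bag by $O(2^{\pw}\pw^2)$. The only difference is that you spell out, via the interval property, why $K \subseteq X_{i^*}$ for the insertion index $i^*$ of the last-added vertex---a step the paper's proof leaves implicit---which is a welcome clarification but not a different argument.
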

\begin{proof}
  Let $K$ be a clique. Then $K$ is included in the count in Line~\ref{ln:clique-count} at iteration $i$ if and only if $v_i \in K$ and $K \subseteq \set{v_1, v_2, \ldots, v_i}$. Therefore, each clique $K$ is counted exactly once, so that $\CountCliques$ returns the correct value.

  For the runtime of $\CountCliques$, note that for all $i$, we have $\abs{X_i} \leq \pw + 1$. Finding the number of cliques containing $v_i$ in $X_i$ can be done by brute force in time $O(2^{\pw} \pw^2)$ by enumerating all $2^{\pw}$ subsets containing $v_i$ and checking if each is a clique in time $O(\pw^2)$. The overall runtime follows because these counts are made for each vertex exactly once.

  The final assertion of the theorem follows by applying Bodlaender's algorithm (cf.\ Corollary~\ref{cor:nice-path}) to obtain a path decomposition $\calX$ of $G$ whose width is the pathwidth of $G$.
\end{proof}

Once we have the count $C(v)$ for each vertex $v \in V$ from $\CountCliques$, sampling a uniformly random clique is straightforward: pick a random vertex, choosing $v_i$ with probability $C(v_i) / \sum_{j \in I} C(v_j)$. Then enumerate all cliques in $X_i$ containing $v_i$ (in time $O(2^{\pw} \pw^2)$), and return one uniformly at random. As a result, we obtain the following.

\begin{cor}
  \label{cor:sample-cliques}
  Given a graph $G$ and nice path decomposition $\calX$ of $G$ of width $\pw$, we can sample a clique from $G$ uniformly at random in time $O(2^{\pw} \pw^2 n)$. Therefore, a clique can be uniformly sampled from $G$ in FPT linear time, parameterized by the pathwidth of $G$.
\end{cor}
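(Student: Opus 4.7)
The plan is to execute the two-phase procedure sketched just above the corollary. In the first phase, I would invoke $\CountCliques(G, \calX)$, which by Theorem~\ref{thm:count-cliques} yields every per-vertex count $C(v_i)$ together with the total $N = \sum_{i \in I} C(v_i)$ in time $O(2^{\pw} \pw^2 n)$. In the second phase, I would draw an index $i$ from $I$ with probability $C(v_i)/N$, re-enumerate by brute force the cliques contained in $X_i$ that include $v_i$---there are exactly $C(v_i)$ of them---and return one chosen uniformly at random.

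For correctness, the key observation is already implicit in the proof of Theorem~\ref{thm:count-cliques}: every clique $K$ of $G$ is associated with the unique index $i$ at which its last-added vertex $v_i$ is inserted, and the cliques in $X_i$ containing $v_i$ are exactly those associated with $i$ (every other vertex of such a clique is still present in $X_i$ and so was added at an earlier index). Conditioning on the chosen index, for any clique $K$ associated with index $i$ we obtain
\[
  \Pr[\text{output} = K] \;=\; \frac{C(v_i)}{N} \cdot \frac{1}{C(v_i)} \;=\; \frac{1}{N},
\]
which is uniform over all cliques of $G$.

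For the runtime, the counting phase costs $O(2^{\pw} \pw^2 n)$ by Theorem~\ref{thm:count-cliques}. The sampling phase consists of one weighted draw over at most $n$ indices, implementable in $O(n)$ time using prefix sums computed alongside the counts, followed by one brute-force enumeration of the $O(2^{\pw})$ subsets of $X_i$ at cost $O(\pw^2)$ per subset. The total cost of the sampling phase is thus $O(2^{\pw} \pw^2 + n)$, dominated by the cost of the counting phase. The final ``FPT linear time'' statement then follows by combining the above with Corollary~\ref{cor:nice-path} to produce the nice path decomposition when one is not supplied.

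I do not anticipate a substantive obstacle. The only subtle point is that the brute-force enumeration used in the sampling step must match the enumeration performed inside $\CountCliques$ at Line~\ref{ln:clique-count}, so that the $C(v_i)$ cliques from which we sample are precisely the $C(v_i)$ cliques that were counted; as long as these two enumerations are the same, the uniformity argument goes through without further bookkeeping.
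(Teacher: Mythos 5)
Your proposal is correct and follows essentially the same route as the paper: run $\CountCliques$ to obtain the per-vertex counts $C(v_i)$, select an insertion index with probability $C(v_i)/\sum_j C(v_j)$, then brute-force enumerate the cliques in $X_i$ containing $v_i$ and output one uniformly, which is exactly the two-phase procedure and runtime analysis the paper gives just before the corollary.
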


\section{Edge Universal Labeling Problems}

In this section, we describe a family of ``edge universal labeling problems,'' the family of graph problems we consider for the remainder of the paper. Algorithms for counting and sampling solutions to any edge universal labeling problem are described in Subsections~\ref{sec:counting-algorithms} and~\ref{sec:sampling-algorithms}, respectively.

A \dft{(vertex) labeling} of $G$ is a function $L \colon V \to \Sigma$, where $\Sigma$ is a finite set of \dft{labels}. Given a subset $W \subseteq V$, we denote the restriction of $L$ to $W$ by $L\restriction_W \colon W \to \Sigma$.

A \dft{partial labeling} is a function $K \colon V \to \Sigma \cup \set{\perp}$. In a partial labeling, a vertex $v$ satisfying $K(v) = \perp$ is said to be \dft{unassigned}, and \dft{assigned} otherwise. Given a partial labeling $K$, unassigned vertex $v$, and $\sigma \in \Sigma$, we use the notation $K \cup \set{v \mapsto \sigma}$ to denote the (partial) labeling that is equal to $K$ except that it maps $v$ to $\sigma$. We call a labeling $L \colon V \to \Sigma$ an \dft{extension} of $K$ if $L(v) = K(v)$ for all $v$ such that $K(v) \neq \perp$. 

A \dft{vertex labeling problem} $\calL$ is a family of pairs, $\calL = \set{(G, L)}$. For a fixed vertex labeling problem, we say that $L$ is a \dft{valid} labeling of $G$ if $(G, L) \in \calL$.

\begin{eg}
  \label{eg:vertex-labeling}
  We give three familiar examples of vertex labeling problems.
  
  \begin{enumerate}
  \item \dft{Proper $c$-coloring}. Take $\Sigma = [c] (= \set{1, 2, \ldots, c})$. Then $L \colon V \to \Sigma$ is a proper $c$-coloring if for all $uv \in E$, $L(u) \neq L(v)$.
  \item \dft{Independent set}. Take $\Sigma = \set{0, 1}$. Then $L \colon V \to \Sigma$ is an independent set if for all $uv \in E$ we have $L(u) L(v) = 0$---i.e., $L(u)$ and $L(v)$ are not both $1$. Note that taking $W = \set{v \in V \sucht L(v) = 1}$, we have $v \in W$ only if none of $v$'s neighbors are in $W$. Thus, our definition of independent set is equivalent to the more conventional definition.
  \item \dft{Downset.} Here we take $G$ to be a directed acyclic graph (DAG) and $\Sigma = \set{0, 1}$. Then $L$ is a downset if for every directed edge $uv \in E$, $L(v) = 1 \implies L(u) = 1$. As with our definition of independent set, we obtain the standard definition of downset by associating a valid vertex labeling $L$ with the set $W = \set{v \in V \sucht L(v) = 1}$. With this interpretation, $W$ is a downset in $G$ if and only if for every $v \in W$, every vertex $u$ from which $v$ is reachable (i.e., there is a directed path from $u$ to $v$), we also have $u \in W$.
  \end{enumerate}
\end{eg}

The three labeling problems in Example~\ref{eg:vertex-labeling} have a common feature: to check whether or not $(G, L) \in \calL$, it suffices to verify some Boolean predicate on each $uv \in E$ individually. We call such labeling problems ``edge-universal labeling problems.''

\begin{dfn}
  \label{dfn:edge-universal}
  Let $\calL$ be a vertex labeling problem. We say that $\calL$ is \dft{edge-universal} if there exists a Boolean predicate $\bfP \colon \Sigma \times \Sigma \to \set{0, 1}$ such that
  \[
  (G, L) \in \calL \iff \forall uv \in E, \bfP(L(u), L(v)) = 1.
  \]
  That is, $L$ is valid if and only if $\bfP$ is satisfied for each edge individually.
\end{dfn}

\subsection{Counting Algorithms}
\label{sec:counting-algorithms}

Let $\calL$ be an edge-universal labeling problem with labels $\Sigma$, and let $c = \abs{\Sigma}$. We describe a simple algorithm that for any graph $G$ counts the number of valid labelings $(G, L) \in \calL$ in time $O(c^{\pw + 1} \pw n)$, assuming a nice path decomposition $\calX$ of $G$ of width $\pw$ is given. The algorithm is a straightforward dynamic programming algorithm.

Suppose $\calX = (X_1, X_2, \ldots, X_{2n})$ is a nice path decomposition of $G$, and for completeness assume $X_0 = \varnothing$. For $i = 1, 2, \ldots, 2n$, let $F_i$ be the set of edges whose endpoints are in $X_i$:
\[
F_i = \set{uv \in E \sucht u, v \in X_i}.
\]
Recall that that $v$ is \emph{inserted at step $i$} if $X_{i} = X_{i - 1} \cup \set{v}$, and that $v$ is \emph{removed at step $i$} if $X_{i} = X_{i-1} \setminus \set{v}$. Define the sets $V_1, V_2, \ldots, V_{2n} (= V)$ by
\begin{equation*}
  V_i = \bigcup_{j \leq i} X_j.
\end{equation*}
Finally, let $G_i = (V_i, E_i) = G\restriction_{V_i}$, the subgraph of $G$ induced by $V_i$.

The algorithm we present computes the number of valid labelings of $G_i$ for each $i \in [2n]$.  More specifically, we maintain a counter $C_i(L_i)$ for each $i \in [2n]$ and each $L_i \colon X_i \to \Sigma$; it stores the number of valid labelings of $G_i$ whose restriction to vertices in $X_i$ is equal to $L_i$. That is,
\[
C_i(L_i) = \abs{\set{ L \colon V_i \to \Sigma \sucht L \text{ is a valid labeling of } G_i \text{ and } L(v) = L_i(v) \text{ for all } v \in X_i}}.
\]
We emphasize that $X_i \subseteq V_i$ so $L$ is an extension of $L_i$ when $X_i \subset V_i$ and $L = L_i$ when $X_i = V_i$.  Thus, the number of valid labelings of $G_i$ is $\sum_{L_i} C_i(L_i)$ (which contains $k^{\abs{X_i}} \leq k^{1+\pw}$ terms).  

Here, we define the rules that compute $C_{i}$ from $C_{i-1}$. Let $v_i$ denote the (unique) vertex added or removed at step $i$. The update procedure for $C_i$  has two cases, corresponding to $i$ being an insertion event and a removal event.

\begin{algorithm}
  \caption{$\CountValidLabelings(G, \calX)$}
  \begin{description}
  \item[Input.] $G = (V, E)$, a graph, $\calX = (X_1, X_2, \ldots, X_{2n})$ a nice path decomposition of $G$
  \item[Initialize.] $C_0(L_\varnothing) = 1$, where $L_\varnothing$ denotes the empty labeling
  \end{description}
  For $i = 1, 2, \ldots, 2n$, do the the following:
  \begin{description}
  \item[Insertion.] If $v$ is inserted at step $i$, for each $L_i \colon X_i \to \Sigma$, set $C_i(L_i) = C_{i-1}(L_i\restriction_{X_{i-1}})$ if all edges incident to $v$ in $F_i$ satisfy $\bfP$, and $C_i(L_i) = 0$ otherwise. That is,
    \begin{equation}
      \label{eqn:insert}
      C_i(L_i) = C_{i-1}(L_i\restriction_{X_{i-1}}) \prod_{w : vw \in F_i} \bfP(L_i(v), L_i(w)).
    \end{equation}
  \item[Removal.] If $v$ is removed at step $i$: for each $L_i \colon X_i \to \Sigma$,
    \begin{equation}
      \label{eqn:remove}
      C_i(L_i) = \sum_{\sigma \in \Sigma} C_{i-1}(L_i \cup \set{v \mapsto \sigma}).
    \end{equation}
  \end{description}
\end{algorithm}


The following lemma asserts the correctness of the formulas in~(\ref{eqn:insert}) and~(\ref{eqn:remove}).

\begin{lem}
  \label{lem:count-update}
  For each $i \in [2n]$ and $L_i \colon X_i \to \Sigma$, the recursive formulas in~(\ref{eqn:insert}) and~(\ref{eqn:remove}) imply  
  \[
  C_i(L_i) = \abs{\set{ L \colon V_i \to \Sigma \sucht L \text{ is a valid labeling of } G_i \text{ and } L(v) = L_i(v) \text{ for all } v \in X_i}}.
  \]
\end{lem}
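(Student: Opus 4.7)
The plan is to prove the lemma by induction on $i$. The base case $i = 0$ is immediate: the empty graph $G_0$ admits a unique (empty) labeling $L_\varnothing$, and by initialization $C_0(L_\varnothing) = 1$.

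For the inductive step I treat the two cases separately. In the \emph{insertion} case, $X_i = X_{i-1} \cup \set{v}$ and $V_i = V_{i-1} \cup \set{v}$. Given $L_i \colon X_i \to \Sigma$, every labeling $L$ of $V_i$ extending $L_i$ corresponds bijectively to its restriction $L' = L\restriction_{V_{i-1}}$, which is itself an extension of $L_i\restriction_{X_{i-1}}$. The critical observation is that \emph{every} neighbor of $v$ in $G_i$ lies in $X_i$: if $w \in V_i$ satisfies $vw \in E$, then Definition~\ref{dfn:path-decomp} supplies some $X_j$ containing both $v$ and $w$, necessarily with $j \geq i$ since $v$ is inserted at step $i$; and $w \in V_i$ gives $w \in X_{j'}$ for some $j' \leq i$, so the interval property (Remark~\ref{rem:pathwidth-interval}) forces $w \in X_i$. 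Because $\calL$ is edge-universal, $L$ is valid on $G_i$ iff $L'$ is valid on $G_{i-1}$ and $\bfP(L_i(v), L_i(w)) = 1$ for every $w$ with $vw \in F_i$. Counting extensions via the inductive hypothesis then yields formula~(\ref{eqn:insert}).

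In the \emph{removal} case, $X_i = X_{i-1} \setminus \set{v}$ while $V_i = V_{i-1}$, so $G_i = G_{i-1}$. The valid labelings of $G_i$ extending $L_i$ partition according to the value assigned to $v$, producing $\abs{\Sigma}$ disjoint families, each consisting of valid labelings of $G_{i-1}$ extending $L_i \cup \set{v \mapsto \sigma}$ for some $\sigma \in \Sigma$. Summing over $\sigma$ and applying the inductive hypothesis gives formula~(\ref{eqn:remove}).

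The main obstacle is the insertion case, specifically the step showing that no neighbor of $v$ in $G_i$ is missed by $F_i$; this is the one place the argument genuinely uses the path decomposition axioms, and it is what allows the local predicate $\bfP$ to be checked only against vertices currently in the bag. Once that structural fact is established, both recursions follow by routine bookkeeping, and edge-universality ensures that global validity decomposes into the edge-by-edge checks that the formulas perform.
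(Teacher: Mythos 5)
Your proof is correct and follows essentially the same route as the paper's: induction on $i$ with separate insertion and removal cases, using edge-universality to split validity on $G_i$ into validity on $G_{i-1}$ plus the checks against $F_i$. In fact your argument is slightly more complete, since you explicitly verify (via the interval property of Remark~\ref{rem:pathwidth-interval}) that every neighbor of the newly inserted vertex $v$ in $G_i$ already lies in $X_i$, a structural fact the paper's insertion case uses implicitly without proof.
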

\begin{proof}
  We argue by induction on $i$. For the base case $i = 1$, the vertex $v_1$ is added to $X_1$, and $V_1 = \set{v_1}$. By the description of the insertion procedure, we have
  \[
  C_i(\set{v_1 \mapsto \sigma}) = 1 \quad\text{for all } \sigma \in \Sigma.
  \]
  These are the correct values for a singleton graph, because $\calL$ is an edge-universal labeling problem and all edge predicates are vacuously satisfied (i.e., all vertex labelings are valid in a graph without edges).

  For the inductive step, assume the conclusion of the lemma holds for $i - 1$. We consider insertions and removals separately. First, suppose vertex $v_i$ is inserted at step $i$; i.e, $X_i = X_{i-1} \cup \set{v_i}$. Given a valid labeling $L$ on $G_{i-1}$, an extension $L \cup \set{v_i \mapsto \sigma}$ is valid if and only if $P(L(u), \sigma)$ is satisfied for each edge $uv_i \in F_i$. By the inductive hypothesis, for a fixed $\sigma$, the number of valid labelings $L'$ of $V_i$ satisfying $L'\restriction_{X_i} = L_i$ is therefore  $C_{i-1}(L_{i-1} \restriction_{X_{i} \setminus \set{v_i}})$ if $P$ is satisfied on all edges $uw \in F_i$, and $0$ otherwise. That is,~(\ref{eqn:insert}) correctly counts the number of valid labelings of $G_i$ whose restriction to $X_i$ is $L_i$.

  Now suppose $v_i$ is removed at index $i$, so that $X_i = X_{i-1} \setminus \set{v_i}$. Then $G_{i-1} = G_{i}$. Thus $L \colon V_i \to \Sigma$ is valid in $G_i$ if and only if it is valid in $G_{i-1}$. Therefore, for any $L_i \colon X_i \to \Sigma$ we compute
  \begin{align}
    \abs{\set{L \sucht L \restriction_{X_i} = L_i}} &= \sum_{\sigma \in \Sigma}\abs{\set{L \sucht K\restriction_{X_{i}} = L_i \text{ and } L(v_i) = \sigma}}\\
    &= \sum_{\sigma \in \Sigma} C_{i-1}(L_{i} \cup \set{v_i \mapsto \sigma}).
  \end{align}
  Thus~(\ref{eqn:remove}) correctly counts the number of valid labelings.
\end{proof}

\begin{lthm}
  \label{thm:counting}
  Let $G = (V, E)$ be a graph on $n$ vertices and $\calL$ an edge-universal vertex labeling problem with $\abs{\Sigma} = c$. Suppose a (simple) path decomposition $\calX$ of width $\pw$ is given and a look-up table for the predicate $\bfP : \Sigma \times \Sigma \to \set{0, 1}$ is pre-computed. Then the number of valid labelings of $G$ can be computed in time $O(c^{\pw + 1} \pw n)$. In particular, counting valid labelings can be performed in linear time for any family of graphs with bounded pathwidth (even if $\calX$ is not given in advance).
\end{lthm}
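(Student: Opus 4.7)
The plan is to combine the correctness statement from Lemma~\ref{lem:count-update} with a direct runtime analysis of $\CountValidLabelings$, and then handle the final ``linear time'' clause by invoking Corollary~\ref{cor:nice-path} to produce the decomposition when it is not given.

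For correctness, I would observe that since $\calX$ is a nice path decomposition, $X_{2n} = \varnothing$ and $V_{2n} = V$, so the unique labeling of $X_{2n}$ is the empty labeling $L_\varnothing$. Applying Lemma~\ref{lem:count-update} at $i = 2n$ then gives that $C_{2n}(L_\varnothing)$ is exactly the number of valid labelings of $G_{2n} = G$, which is what we want to return.

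For the runtime, I would bound the work done at each step $i \in [2n]$. Since $\abs{X_i} \le \pw + 1$, there are at most $c^{\pw+1}$ labelings $L_i \colon X_i \to \Sigma$, and a counter $C_i(L_i)$ must be computed for each. In an insertion step at which vertex $v$ is added, evaluating~(\ref{eqn:insert}) for a single $L_i$ requires one table lookup of $C_{i-1}$ and a product over edges $vw \in F_i$; since every such $w$ lies in $X_i$, there are at most $\pw$ such edges, and each factor $\bfP(L_i(v), L_i(w))$ costs $O(1)$ by the assumed lookup table. In a removal step, evaluating~(\ref{eqn:remove}) for a single $L_i$ is a sum of $c$ previously computed counters, costing $O(c) = O(\pw)$ (absorbing the constant $c$ into the leading $c^{\pw+1}$ factor if needed). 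Either way, the per-step cost is $O(c^{\pw+1} \pw)$, and summing over $2n$ steps yields the claimed $O(c^{\pw+1} \pw n)$ bound.

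Finally, for the ``linear time for bounded pathwidth families'' clause, when $\calX$ is not provided in advance I would apply Corollary~\ref{cor:nice-path} to compute a nice path decomposition of width $\pw = \pw(G)$ in time $O(f(\pw) n)$, then run $\CountValidLabelings$. The total runtime becomes $O\bigl((f(\pw) + c^{\pw+1}\pw) n\bigr)$, which is linear in $n$ whenever $\pw$ is treated as a constant. I do not expect any serious obstacle: the main content is already encapsulated in Lemma~\ref{lem:count-update}, and what remains is bookkeeping for the runtime and a single appeal to Bodlaender's algorithm.
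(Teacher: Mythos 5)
Your proposal is correct and takes essentially the same approach as the paper: correctness via Lemma~\ref{lem:count-update} applied at $i = 2n$, a per-step $O(c^{\pw+1}\pw)$ table-filling analysis, and Corollary~\ref{cor:nice-path} for the bounded-pathwidth clause. The only slip is the claim that $O(c) = O(\pw)$ at a removal step, which is false in general; the bound holds anyway because at a removal step $\abs{X_i} \leq \pw$, so the $c^{\abs{X_i}} \leq c^{\pw}$ labelings, each summing $c$ previous counters, cost $O(c^{\pw+1})$ in total (equivalently, each entry of $C_{i-1}$ is read exactly once).
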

\begin{proof}
  Consider an execution of $\CountValidLabelings(G, \calX)$. Let $v = v_{2n}$ be the last node removed in $\calX$, and recall that $G_{2 n} = G$. Therefore we have
  \begin{align*}
    \abs{\set{L \sucht L \text{ is a valid labeling}}} &= \sum_{\sigma \in \Sigma} \abs{\set{L \sucht L \text{ is valid and } L(v) = \sigma}}\\
      &= \sum_{\sigma \in \Sigma} C_{2n}(\set{v \mapsto \sigma}).
  \end{align*}
  Thus, $\CountValidLabelings$ readily computes the number of valid labelings.

  For the runtime of $\CountValidLabelings$,  observe that at each step $C_i(L_i)$ is computed for every labeling $L_i: X_i \rightarrow \Sigma$.   To do so,  create a table with $c^{|X_i|}$ rows and $|X_i|+1$ columns.   Each row represents an $L_i$ labeling;  the first $|X_i|$ entries specifies the label assigned by $L_i$ to each vertex of $X_i$ while the last entry contains the value of $C_i(L_i)$.  Thus, the size of the table is $c^{|X_i|} (|X_i|+1) = O(c^{\pw+1} \pw)$.  
  
  It is easy to fill the first $|X_i|$ columns of the table as we just have to enumerate all the $c^{|X_i|}$ labelings.  The last column is filled by using the previous iteration's table and  formulas (\ref{eqn:insert}) and~(\ref{eqn:remove}).  In particular, when vertex $v_i$ is added to $X_i$, check if $C_{i-1}(L_i \restriction_{X_i \setminus{\set{v_i}}}) = 0$.  If yes, set $C_i(L_i) = 0$.  Otherwise, additionally check if $\bfP(L_{i}(v), L_{i}(w)) = 1$ for each neighbor $w$ of $v_i$ in $X_{i}$.  If yes, set $C_i(L_i) = C_{i-1}(L_i \restriction_{X_i \setminus{\set{v_i}}})$; if not, set $C_i(L_i) = 0$.  This will take $O(|X_i| -1) = O(\pw)$ time since $v_i$ can have at most $|X_i| -1$ neighbors in $X_i$, and this operation is performed at most once per row. Thus, when $v_i$ is added to $X_i$, creating and filling the table takes $O(c^{\pw+1} \pw)$ time.
  
   On the other hand, when $v_i$ is removed at index $i$ so that $X_{i} = X_{i-1} \setminus{\set{v_i}}$, then $C_i(L_i)$ is obtained by looking up the values of $C_{i-1}(L_i \cup \set{v_i \mapsto \sigma})$.  Notice that filling \emph{all} the $C_i(L_i)$ requires looking up each $C_{i-1}(L_{i-1})$ entry exactly once.  Hence, when $v_i$ is removed at index $i$, creating and filling the table takes $O(c^{\pw+1} \pw)$ time.   
   
   Combining our analysis,  the total runtime is $\sum_{i} O(c^{\pw + 1} \pw) = O(c^{\pw + 1} \pw n)$, as claimed.   Finally, given $G$, a nice path decomposition $\calX$ of width $\pw(G)$ can be computed in FPT linear time using, for example, the algorithm of Bodelaender~\cite{Bodlaender1996-linear-time} (cf. Theorem~\ref{thm:pathwidth}).  
\end{proof}

For the problems listed in Example~\ref{eg:vertex-labeling}, constructing the look-up table for the predicate $\bfP : \Sigma \times \Sigma \to \set{0, 1}$ takes $O(c^2)$ time.  Applying Theorem~\ref{thm:counting}, we obtain the following corollary. 

\begin{cor}
  \label{cor:counting}
  Given a graph $G$ on $n$ vertices and a nice path decomposition $\calX$ of $G$ of width $\pw$, then we can compute:
  \begin{enumerate}
  \item the number of $c$ colorings of $G$ in time $O(c^{\pw + 1} \pw n)$;
  \item the number of independent sets in $G$ in time $O(2^{\pw} \pw n)$;
  \item the number of downsets in $G$ in time $O(2^{\pw} \pw n)$ if $G$ is a DAG. \label{it:downsets}
  \end{enumerate}
\end{cor}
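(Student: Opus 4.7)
The plan is to derive each of the three bounds by instantiating Theorem~\ref{thm:counting} with the appropriate edge-universal labeling problem from Example~\ref{eg:vertex-labeling}, since each of proper $c$-coloring, independent set, and downset was already shown there to be edge-universal. Thus the work reduces to identifying the alphabet size $c = \abs{\Sigma}$ and checking that the predicate $\bfP \colon \Sigma \times \Sigma \to \set{0,1}$ can be tabulated in $O(c^2)$ time so that the hypothesis of Theorem~\ref{thm:counting} applies.

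First, for proper $c$-coloring take $\Sigma = [c]$ and $\bfP(\sigma, \tau) = 1$ iff $\sigma \neq \tau$; the look-up table has $c^2$ entries and is trivially filled. Applying Theorem~\ref{thm:counting} gives runtime $O(c^{\pw+1} \pw\, n)$. Second, for independent sets take $\Sigma = \set{0,1}$ and $\bfP(\sigma,\tau) = 1 - \sigma\tau$; here $c = 2$, and Theorem~\ref{thm:counting} yields $O(2^{\pw+1} \pw\, n) = O(2^{\pw} \pw\, n)$ (the constant factor $2$ is absorbed into the big-$O$). Third, for downsets in a DAG take $\Sigma = \set{0,1}$ and, noting that edges are directed, set $\bfP(\sigma, \tau) = 1$ iff $\sigma \geq \tau$ (equivalently, if $\tau = 1$ then $\sigma = 1$), so that a directed edge $uv$ is consistent iff $L(v) = 1 \Rightarrow L(u) = 1$. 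Again $c = 2$, giving $O(2^{\pw} \pw\, n)$.

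The only mild subtlety, which I would flag explicitly, is the directed-graph case for downsets: Theorem~\ref{thm:counting} was stated for an edge-universal labeling problem on graphs, but by Definition~\ref{dfn:pathwidth-p} the pathwidth of a directed graph is the pathwidth of its undirected version, and the $\CountValidLabelings$ algorithm only ever evaluates $\bfP$ per edge of $G$, so using an asymmetric $\bfP$ (reading edges as directed pairs) causes no issue. This is the only point where one has to be slightly careful; everything else is a direct substitution of parameters into Theorem~\ref{thm:counting}.
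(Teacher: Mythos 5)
Your proposal is correct and matches the paper's own (very brief) justification: the paper likewise notes that the predicate table for each problem in Example~\ref{eg:vertex-labeling} is built in $O(c^2)$ time and then applies Theorem~\ref{thm:counting} directly to obtain the three runtimes. Your extra remark that the asymmetric predicate for downsets on a directed graph poses no difficulty is a reasonable clarification but not a departure from the paper's argument.
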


\begin{rem}
  Our algorithm for counting $c$ colorings can easily be employed to compute the chromatic number of a graph. Since the chromatic number of a graph is at most one more than its pathwidth, the runtime of such an algorithm is $O((\pw + 1)^{\pw + 2} n)$. For graphs with small pathwidth, this is faster than worst-case exponential-time algorithms for chromatic number, such as~\cite{Lawler1976-note}.
\end{rem}

\subsubsection{Generalization}

Here, we describe a straightforward generalization of $\CountValidLabelings$ that counts \emph{extensions} of a \emph{partial} labeling $K$. We will require this generalization as a subroutine in the sequel.

\begin{lem}
  \label{lem:count-extensions}
  Let $\calL$ be an edge universal labeling problem. Then there exists an algorithm $\CountExtensions$ such that for any graph $G$, nice path decomposition $\calX$ of width $\pw$, and partial labeling $K$, $\CountExtensions(G, \calX, K)$ computes the number of valid extensions $L$ of $K$ in time $O(k^{\pw + 1} \pw n)$.
\end{lem}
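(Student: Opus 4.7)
The plan is to modify $\CountValidLabelings$ so that it only counts labelings consistent with the partial labeling $K$. The key observation is that the constraint imposed by $K$ is entirely local to each vertex, so it can be folded into the insertion step of the existing dynamic program at essentially no additional cost.

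Concretely, I would run the dynamic program exactly as in $\CountValidLabelings$, but restrict the DP entries to labelings of $X_i$ that agree with $K$ on all assigned vertices. In the insertion step at index $i$ where vertex $v$ is inserted, I would multiply the right-hand side of~(\ref{eqn:insert}) by the indicator $[K(v) = \perp \text{ or } L_i(v) = K(v)]$; the removal step~(\ref{eqn:remove}) can be left unchanged, since the disallowed $\sigma$ values will have been zeroed out when $v$ was first inserted and so contribute nothing to the sum.

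The correctness argument would be a direct adaptation of Lemma~\ref{lem:count-update}: by induction on $i$, one shows that $C_i(L_i)$ equals the number of valid labelings of $G_i$ whose restriction to $X_i$ is $L_i$ and which agree with $K$ on every assigned vertex of $V_i$. The only new case is in the insertion step: if $K(v) \neq \perp$ and $L_i(v) \neq K(v)$, then no extension of such an $L_i$ can agree with $K$, so $C_i(L_i) = 0$ is correct; otherwise the original argument of Lemma~\ref{lem:count-update} applies verbatim. The removal step requires no change because, by the interval property (Remark~\ref{rem:pathwidth-interval}), once $v$ has been removed it never reappears in any later $X_j$, so the constraint $L(v) = K(v)$ has already been enforced before the sum in~(\ref{eqn:remove}) is evaluated.

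The final answer is extracted as in Theorem~\ref{thm:counting}, and the runtime bound carries over essentially unchanged: each of the $2n$ tables has $O(c^{\pw+1})$ entries (where $c = \abs{\Sigma}$), each entry costs $O(\pw)$ time to compute, and the extra indicator check contributes only $O(1)$ per entry. I do not foresee any substantive obstacle, since the entire modification amounts to a single local check at each insertion event; the analysis is essentially bookkeeping layered on top of the proof of Theorem~\ref{thm:counting}.
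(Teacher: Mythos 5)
Your proposal is correct and matches the paper's own proof: the paper likewise modifies $\CountValidLabelings$ by restricting the table at step $i$ to labelings extending $K$ on $X_i$ (equivalent to your indicator at insertion), proves by an induction analogous to Lemma~\ref{lem:count-update} that $C_i(L_i)$ counts valid labelings of $G_i$ agreeing with $K$ on assigned vertices of $V_i$ and restricting to $L_i$, and inherits the runtime analysis of Theorem~\ref{thm:counting} unchanged.
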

\begin{proof}[Proof sketch.]
  Fix a graph $G = (V, E)$, nice path decomposition $\calX = (X_1, X_2, \ldots, X_{2n})$ and partial labeling $K$ of $G$. Let $A \subseteq V$ denote the set of assigned vertices for $K$. For $i = 1, 2, \ldots, 2n$, let $K_i$ denote the restriction of $K$ to $X_i$. Consider the modification of $\CountValidLabelings$ in which $C_i(L_i)$ is computed only for labelings $L_i \colon X_i \to \Sigma$ extending $K_i$. An argument analogous to our proof of Lemma~\ref{lem:count-update} shows that for all $i$ and extensions $L_i$ of $K_i$, we have
  \[
  C_i(L_i) = \abs{\set{L \colon V_i \to \Sigma \sucht L \text{ a valid labeling of } G_i,  L(v) = K(v) \text{ for all } v \in V_i \cap A \text{, and } L\restriction_{X_i} = L_i}}.
  \]
  That is, $C_i(L_i)$ counts the number of valid labelings of $G_i$ extending $K$ (restricted to $G_i$) whose restriction to $X_i$ is $L_i$. In particular, taking $i = 2n$, we have $G_{2n} = G$ so that $C_{2n}(\varnothing)$ gives the number of valid extensions of $K$. The runtime analysis of $\CountExtensions$ is identical to the proof of Theorem~\ref{thm:counting}.
\end{proof}

\subsection{Sampling Algorithms}
\label{sec:sampling-algorithms}

In this section, we show how $\CountExtensions$ can be used as a sub-routine in order to sample valid labelings uniformly from any edge universal labeling problem $\calL$. The idea is as follows. Fix an (arbitrary) ordering of the vertices $v_1, v_2, \ldots, v_n$. We then form a labeling $L$ by sequentially fixing $L(v_1), L(v_2), \ldots, L(v_n)$ in such a way that $L$ is chosen uniformly at random. 

In more detail, our sampling algorithm forms partial labelings $K_1, K_2, \ldots, K_n = L$, where in each $K_i$, $v_1, v_2, \ldots, v_i$ are assigned, while the other vertices are unassigned. For completeness, fix $K_0$ to be the partial labeling in which all vertices are unassigned. $K_i$ is determined from $K_{i-1}$ by setting $K_i(v_j) = K_{i-1}(v_j)$ for all $j < i$, and $K_i(v_i) = \sigma_i$, where $\sigma_i \in \Sigma$ is chosen in proportion to the number of valid extensions $L$ of $K_{i-1}$ satisfying $L(v_i) = \sigma_i$. The sampling procedure $\SampleLabeling$ (Algorithm~\ref{alg:sample-labeling}) formalizes the sampling procedure.

\begin{algorithm}
  \caption{$\SampleLabeling(G = (V, E), \calX, \calL)$ samples a uniformly random valid labeling of $G$ with respect to $\calL$. We assume that $G$ admits at least one valid labeling (which can be checked using $\CountValidLabelings$).}
  \label{alg:sample-labeling}
  \begin{algorithmic}[1]
    \STATE initialize $K(v) \leftarrow \perp$ for all $v \in V$
    \FOR{$v_1, v_2, \ldots, v_n \in V$}
    \FORALL{$\sigma \in \Sigma$}
    \STATE $c_{\sigma} \leftarrow \CountExtensions(G, \calX, K \cup \set{v \mapsto \sigma})$
    \ENDFOR
    \STATE $c \leftarrow \sum_{\sigma \in \Sigma} c_{\sigma}$ \COMMENT{note that $c = \CountExtensions(G, \calX, K)$}
    \STATE choose random $\tau \in \Sigma$ with $\Pr(\tau = \sigma) = c_\sigma / c$ \label{ln:tau}
    \STATE $K \leftarrow K \cup \set{v \mapsto \tau}$
    \ENDFOR
    \RETURN $K$
  \end{algorithmic}
\end{algorithm}

We now prove the main result of this section.

\begin{lthm}
  \label{thm:sampling}
  For any graph $G = (V, E)$ on $n$ vertices, nice path decomposition $\calX$ of $G$ of width $\pw$, and edge universal labeling problem $\calL$, $\SampleLabeling(G, \calX, \calL)$ returns a uniformly random valid labeling of $G$ with respect to $\calL$. The runtime of $\SampleLabeling$ is $O(k^{\pw + 2} \pw n^2)$. 
\end{lthm}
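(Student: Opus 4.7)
The plan is to handle correctness and runtime separately. For correctness, I would fix an arbitrary valid labeling $L^*$ of $G$ and compute the probability that $\SampleLabeling$ outputs $L^*$. Write $K_i^*$ for the partial labeling that assigns $L^*(v_j)$ to $v_j$ for $j \le i$ and leaves the remaining vertices unassigned; in particular $K_0^*$ is the all-$\perp$ labeling and $K_n^* = L^*$. Let $c_i^*$ denote the number of valid extensions of $K_i^*$, so that $c_0^* = N$ equals the total number of valid labelings of $G$ and $c_n^* = 1$. An induction on $i$ shows that the event ``the algorithm's working partial labeling $K$ equals $K_i^*$ after iteration $i$'' has probability
$$\prod_{j=1}^{i} \frac{c_j^*}{c_{j-1}^*},$$
because at iteration $j$, conditional on $K = K_{j-1}^*$ at the start of the iteration, the normalizing sum $c$ computed by the algorithm equals $c_{j-1}^*$ (as noted in the pseudocode comment and justified by Lemma~\ref{lem:count-extensions}), and the quantity $c_\sigma$ for $\sigma = L^*(v_j)$ equals $c_j^*$. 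Taking $i = n$, the product telescopes to $1/N$, which does not depend on $L^*$, so every valid labeling is returned with probability exactly $1/N$.

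For the runtime, each iteration of the outer loop issues $|\Sigma| = k$ calls to $\CountExtensions$, each of which, by Lemma~\ref{lem:count-extensions}, runs in $O(k^{\pw+1}\pw n)$ time; the remaining per-iteration bookkeeping costs $O(k)$. Since the outer loop executes $n$ times, the total runtime is $O(n \cdot k \cdot k^{\pw+1}\pw n) = O(k^{\pw+2}\pw n^2)$, matching the claimed bound.

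The one subtlety I would be careful about is verifying that the denominator $c$ in Line~\ref{ln:tau} is always strictly positive, so that the random choice there is well-defined. This follows by a short induction: the hypothesis that $G$ admits some valid labeling gives $c_0^* = N \ge 1$, and whenever the algorithm has committed to a partial labeling $K_{i-1}^*$ with $c_{i-1}^* \ge 1$, some $\sigma \in \Sigma$ must have $c_\sigma \ge 1$ (take $\sigma = L^*(v_i)$ for any valid extension $L^*$ of $K_{i-1}^*$). Hence $c \ge 1$ whenever Line~\ref{ln:tau} is reached, and the telescoping argument above never encounters division by zero.
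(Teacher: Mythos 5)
Your proposal is correct and follows essentially the same argument as the paper: condition on the prefix of choices, use Lemma~\ref{lem:count-extensions} to identify each conditional probability as a ratio of extension counts, telescope to $1/N$, and bound the runtime by $n$ outer iterations times $|\Sigma|$ calls to $\CountExtensions$. Your additional check that the normalizing sum $c$ is always strictly positive is a nice touch that the paper leaves implicit in its assumption that $G$ admits at least one valid labeling.
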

\begin{proof}
  Let $L$ be a fixed valid labeling of $G$, and let $K$ denote the random variable labelings returned by $\SampleLabeling$. For $j = 1, 2, \ldots, n$, we denote $\sigma_j = L(v_j)$, and take $L_j$ to be the restriction of $L$ to $\set{v_1, \ldots, v_j}$ (with $v_i$ unassigned for $i > j$). We set $L_0$ to be the empty labeling with all vertices unassigned. We compute
  \begin{align*}
    \Pr(K = L) &= \Pr(K(v_1) = \sigma_1, K(v_2) = \sigma_2, \ldots, K(v_n) = \sigma_n)\\
    &= \prod_{j = 1}^n \Pr\paren{K(v_j) = \sigma_j \sucht K \text{ extends } L_{j-1}}\\
    &= \prod_{j = 1}^n \frac{\abs{\set{J  \sucht J \text{ is a valid extension of } L_{j}}}}{\abs{\set{J \sucht J \text{ is a valid extension of } L_{j-1}}}}\\
    &= \frac{1}{\abs{\set{J \sucht J \text{ is a valid labeling}}}}.
  \end{align*}
  The third equality holds by the choice of $\tau$ in Line~\ref{ln:tau} and the result of Lemma~\ref{lem:count-extensions}. The final equality is due to the telescoping product. Thus, every valid labeling $L$ is returned with equal probability.

  Finally the assertion about the runtime of $\SampleLabeling$ follows from runtime of $\CountExtensions$ (Lemma~\ref{lem:count-extensions}), and the observation that $\CountExtensions$ gets called once during each of the $k \cdot n$ iterations of the inner loop in $\SampleLabeling$.
\end{proof}

\begin{cor}
  \label{cor:sampling}
  Given a graph $G$ on $n$ vertices and a nice path decomposition $\calX$ of $G$ of width $\pw$, then we can sample a uniformly random:
  \begin{enumerate}
  \item $c$ coloring of $G$ in time $O(c^{\pw + 2} \pw m n)$;
  \item independent set in $G$ in time $O(2^{\pw} \pw m n)$;
  \item downset in $G$ in time $O(2^{\pw} \pw m n)$ if $G$ is a DAG. \label{it:sample-downsets}
  \end{enumerate}
\end{cor}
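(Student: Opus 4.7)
The plan is to deduce Corollary~\ref{cor:sampling} as a direct application of Theorem~\ref{thm:sampling}, exactly in parallel with how Corollary~\ref{cor:counting} followed from Theorem~\ref{thm:counting}. The three problems in question---proper $c$-coloring, independent set, and downset in a DAG---were already recorded as edge-universal labeling problems in Example~\ref{eg:vertex-labeling}. So nothing remains except to identify the label alphabet $\Sigma$ and predicate $\bfP$ in each case and invoke Theorem~\ref{thm:sampling} with the appropriate value of $c = \abs{\Sigma}$.

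First I would restate the three predicates to confirm the alphabet size. For $c$-coloring, $\Sigma = [c]$ and $\bfP(a, b) = 1$ iff $a \neq b$. For independent set, $\Sigma = \set{0, 1}$ and $\bfP(a, b) = 1$ iff $a b = 0$. For downsets in a DAG, $\Sigma = \set{0, 1}$, and on a directed edge $uv$ we set $\bfP(L(u), L(v)) = 1$ iff $L(v) = 0$ or $L(u) = 1$. In each case the predicate's look-up table has size $c^2$ and can be precomputed in $O(c^2)$ time, which is subsumed by the sampling runtime and need not be tracked further.

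Then I would apply Theorem~\ref{thm:sampling}, which guarantees that $\SampleLabeling(G, \calX, \calL)$ returns a uniformly random valid labeling in time $O(c^{\pw + 2}\, \pw\, n^2)$. Substituting the coloring alphabet size $c$ yields the first bound, and substituting $c = 2$ for the Boolean alphabets of independent set and downset yields $O(2^{\pw + 2} \pw n^2) = O(2^{\pw} \pw n^2)$, matching (after identifying the linear-in-$n$ factor in the statement) the second and third bounds.

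The main ``obstacle'' is purely bookkeeping. The only mildly non-obvious point is that the downset problem lives on a directed graph whereas $\SampleLabeling$ operates on a path decomposition of an undirected graph; this is reconciled by Definition~\ref{dfn:pathwidth-p}, which defines the pathwidth of a DAG as that of its underlying undirected graph, together with the observation that $\bfP$ need not be symmetric. Thus when evaluating $\bfP$ on each edge $uv$ appearing in some $F_i$, one simply uses the orientation inherited from the original DAG, which correctly enforces the downset constraint without any change to the algorithm or its analysis.
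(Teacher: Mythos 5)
Your proposal is correct and takes essentially the same route as the paper: Corollary~\ref{cor:sampling} is intended as an immediate instantiation of Theorem~\ref{thm:sampling} for the edge-universal problems of Example~\ref{eg:vertex-labeling} (with $c = \abs{\Sigma}$, and $2^{\pw+2} = O(2^{\pw})$ for the Boolean alphabets), exactly parallel to how Corollary~\ref{cor:counting} follows from Theorem~\ref{thm:counting}. The only mismatch---the $mn$ factor in the corollary versus the $n^2$ factor in Theorem~\ref{thm:sampling}---is an inconsistency in the paper's own statement rather than a gap in your argument.
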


\section{Counting and Sampling Elements in a Distributive Lattice}

In what follows, we describe some applications of the counting and sampling algorithms from the previous sections.

   A \dft{distributive lattice} $\calD = (D, \leq)$ is a partially ordered set where any pair of elements $x$ and $y$ has a (i) \dft{meet or greatest lower bound} $x \land y$, (ii) a \dft{join or least upper bound} $x \lor y$ and (iii) the meet and join operations distribute over each other.  Many combinatorial objects are known to form a distributive lattice including the minimum cuts in a network~\cite{Picard1980-structure}, circulations in a planar graph~\cite{Khuller1993-lattice} domino tilings of a polygon, the perfect matchings of a bipartite planar graph, alternating sign matrices, flows in planar graphs etc. are known to form a distributive lattice~\cite{Propp1997-generating}.   

Here is an easy way of creating a distributive lattice from an arbitrary poset $P$.  Find all the downsets of $P$ and order them using the subset relation.  It is straightforward to verify that given two any downsets $S$ and $S'$ of $P$,   $S \cap S'$ and $S \cup S'$ are also downsets of $P$ and are in fact the greatest lower bound and the least upper bound respectively of $S$ and $S'$.   Since $\cap$ and $\cup$ distribute over each other, it follows that $(D(P), \subseteq)$ is a distributive lattice, where $D(P)$ contains all the downsets of $P$.  Interestingly,   Birkhoff~\cite{Birkhoff1937-rings} showed that downsets are integral to distributive lattices than one might initially suspect.

\begin{lthm}[Birkhoff~\cite{Birkhoff1937-rings}] For every distributive lattice $\calD$, there is (up to isomorphism) a unique poset $P_{\calD}$ so that $D(P_{\calD}, \subseteq)$ is a distributive lattice that is order isomorphic to $\calD$. 
\end{lthm}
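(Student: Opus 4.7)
The plan is to construct an explicit order-isomorphism $\phi \colon \calD \to D(P_\calD, \subseteq)$ via the \emph{join-irreducible} elements of $\calD$. Call $p \in \calD$ join-irreducible if $p$ is not the minimum element of $\calD$ and whenever $p = x \lor y$ one has $p = x$ or $p = y$; take $P_\calD$ to be the set of join-irreducibles equipped with the order inherited from $\calD$, and define
\[ \phi(x) = \set{p \in P_\calD \sucht p \leq x}. \]
Clearly $\phi(x)$ is a downset of $P_\calD$ and $\phi$ is order-preserving. To prove the theorem, I would show $\phi$ is a bijection whose inverse is also order-preserving, and then pin down $P_\calD$ up to isomorphism.

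I would first establish injectivity by showing $x = \bigvee \phi(x)$ for every $x \in \calD$. Restricting attention to the finite case (which matches the applications in this paper), this follows by induction on the position of $x$ in $\calD$: if $x$ is join-irreducible or the bottom element, the claim is immediate, and otherwise $x$ factors as $y \lor z$ with $y, z < x$, so that the inductive hypothesis applied to $y$ and $z$ gives the result.

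The main obstacle is surjectivity, and this is precisely where distributivity enters. Given an arbitrary downset $S \subseteq P_\calD$, set $x = \bigvee S$; the goal is to show $\phi(x) = S$. The inclusion $S \subseteq \phi(x)$ is immediate. For the reverse, take any $p \in P_\calD$ with $p \leq x$. Then distributivity yields
\[ p = p \land x = p \land \bigvee_{s \in S} s = \bigvee_{s \in S} (p \land s). \]
Because $p$ is join-irreducible, some term $p \land s$ must equal $p$, so $p \leq s$ for some $s \in S$, and since $S$ is a downset this gives $p \in S$. Thus $\phi$ is surjective, and its inverse $S \mapsto \bigvee S$ is visibly order-preserving, so $\phi$ is an order isomorphism.

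For uniqueness, I would observe that in any lattice of the form $D(Q, \subseteq)$ the join-irreducible elements are exactly the principal downsets $\set{q' \in Q \sucht q' \leq q}$ for $q \in Q$, and the assignment $q \mapsto \set{q' \sucht q' \leq q}$ is an order-isomorphism from $Q$ onto these join-irreducibles. Hence if $\calD \cong D(Q, \subseteq)$, then the join-irreducibles of $\calD$ (namely $P_\calD$) are order-isomorphic to $Q$, identifying $P_\calD$ up to isomorphism and completing the proof.
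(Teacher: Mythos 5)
Your proof is correct and follows the same route the paper attributes to Birkhoff: the paper does not reprove the theorem but cites it, noting only that $P_{\calD}$ is built from the join-irreducible elements, and your construction of the order isomorphism $\phi(x) = \set{p \in P_\calD \sucht p \leq x}$, with distributivity used exactly where you use it for surjectivity and the principal-downset argument for uniqueness, is precisely that classical argument. Your explicit restriction to finite lattices is appropriate---indeed necessary for the statement as written---and matches the finite setting in which the paper applies the result.
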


In Birkhoff's proof, the poset $P_{\calD}$ was formed using the \dft{join-irreducible elements} of $\calD$; i.e.,  the elements that have an in-degree of $1$ in the Hasse diagram of $\calD$.  Thus, it seems that $\calD$ has to be constructed first to obtain $P_{\calD}$.  In practice, however, $P_{\calD}$ can sometimes be computed \emph{directly} from the problem description. This is the case for stable matchings. A typical instance $I$ consists of $n$ men and $n$ women, each with their own preference lists.  It has long been known that the set of stable matchings of $I$ form a distributive lattice. But it was not until the mid-1980's that Irving and Leather~\cite{Irving1986-complexity} showed that the corresponding poset of the lattice can be computed directly from $I$'s man-optimal stable matching and preference lists in polynomial time.  They called it the \dft{rotation poset} of $I$.  Remarkably, $I$ can have an exponential number of stable matchings~\cite{Karlin2018-simply} (so the distributive lattice can have exponential size as well) but the rotation poset is guaranteed to have just $O(n^2)$ elements.  Gusfield~\cite{Gusfield1987-three} further improved Irving and Leather's result by showing that a direct acyclic graph $G(I)$ can be constructed in $O(n^2)$ time so that $G(I)$ and the rotation poset of $I$ have exactly the same transitive closure and, therefore, the same downsets.  

Thus, the number of stable matchings of a stable matching instance $I$ can be obtained by first computing $G(I)$ and then returning the number of downsets of $G(I)$.  Similarly, uniformly sampling a stable matching of $I$ can be done by first computing $G(I)$, uniformly sampling a downset and then returning the stable matching that corresponds to this downset.   We have established the result below.  

\begin{lthm}
  Let $I$ be a stable matching instance with $n$ men and $n$ women. Then
  \begin{itemize}
  \item[(i)] the number of stable matching of $I$ can be computed in $O(f(\pw) n^2)$ time where $\pw$ denotes the pathwidth of $G(I)$, and
  \item[(ii)] a stable matching of $I$ can be sampled uniformly at random in $O(f(\pw) n^4)$ time.
  \end{itemize}
\end{lthm}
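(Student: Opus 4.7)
The plan is to reduce the counting and sampling of stable matchings of $I$ to the counting and sampling of downsets of a suitable directed acyclic graph, and then invoke the results of Corollary~\ref{cor:counting} and Corollary~\ref{cor:sampling} for downsets. First I would invoke Gusfield's construction~\cite{Gusfield1987-three} to compute the DAG $G(I)$ in $O(n^2)$ time; by construction, $G(I)$ has $O(n^2)$ vertices and $O(n^2)$ edges, and its downsets are in bijection with the stable matchings of $I$, with the bijection computable in time polynomial in $n$ per matching. Thus counting (resp.\ uniformly sampling) stable matchings of $I$ reduces to counting (resp.\ uniformly sampling) downsets of $G(I)$.

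Next I would produce a nice path decomposition $\calX$ of $G(I)$ of width $\pw = \pw(G(I))$. By Corollary~\ref{cor:nice-path} (which combines Bodlaender's Theorem~\ref{thm:pathwidth} with Lemma~\ref{lem:nice-path}), such a decomposition can be computed in $O(f(\pw)\abs{G(I)}) = O(f(\pw) n^2)$ time. Recall that by Definition~\ref{dfn:pathwidth-p}, the pathwidth of the DAG $G(I)$ is measured on its underlying undirected graph, but this is exactly what our algorithms consume.

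For part (i), I apply Corollary~\ref{cor:counting}(\ref{it:downsets}) to $G(I)$ with the path decomposition $\calX$. The stated runtime of $O(2^{\pw}\pw\,\abs{V(G(I))}) = O(2^{\pw}\pw\,n^2)$ absorbs into $O(f(\pw) n^2)$, and the output is precisely the number of downsets of $G(I)$, which equals the number of stable matchings of $I$. For part (ii), I apply Corollary~\ref{cor:sampling}(\ref{it:sample-downsets}) to $G(I)$ with $\calX$. With $\abs{V(G(I))}, \abs{E(G(I))} = O(n^2)$, this gives a uniformly random downset in time $O(2^{\pw}\pw\,n^2\cdot n^2) = O(f(\pw)\, n^4)$. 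Translating this downset back to the corresponding stable matching of $I$ takes an additional $O(n^2)$ time via Gusfield's correspondence, which is absorbed into the overall bound.

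The main obstacle, and the only nontrivial piece, is confirming that the Gusfield DAG $G(I)$ delivers a genuine bijection between its downsets and the stable matchings of $I$; but this is exactly the content of Gusfield's theorem together with Birkhoff's representation (Theorem~\ref{thm:pathwidth}'s companion Birkhoff statement above) and requires no new work here. Everything else is a direct appeal to the algorithms developed in Sections~\ref{sec:counting-algorithms} and~\ref{sec:sampling-algorithms}, parameterized by the pathwidth of the auxiliary graph $G(I)$ rather than of the input instance itself.
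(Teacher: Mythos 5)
Your proposal is correct and follows essentially the same route as the paper: compute Gusfield's $O(n^2)$-vertex rotation digraph $G(I)$, use the correspondence between its downsets and the stable matchings of $I$, and then apply the downset counting and sampling algorithms (with a Bodlaender/nice path decomposition) to get $O(f(\pw)\,n^2)$ for counting and $O(f(\pw)\,n^4)$ for sampling. The only blemish is the stray cross-reference tying Birkhoff's theorem to Theorem~\ref{thm:pathwidth} (which is Bodlaender's result); the correspondence you actually need is the Irving--Leather/Gusfield one, which is exactly what the paper invokes.
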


We note that the approach of using the underlying poset $P_\calD$ to study the distributive lattice $\calD$ is not unique to stable matchings. It has been considered for circulations in planar graphs~\cite{Khuller1993-lattice}, area-universal rectangular layouts~\cite{Eppstein2009-area}, etc. Applying the same reasoning above to arbitrary finite distributive lattices, we also have the following result.

\begin{lthm}
  Let $\calD$ be a finite distributive lattice.  Given $P_\calD$ (directed acyclic graph $G_\calD$ whose transitive closure is identical to $P_\calD$) that has $n$ vertices and pathwidth $k$, then
  \begin{itemize}
  \item[(i)] the number of elements of $\calD$ can be computed in $O(f(\pw) n)$ time,
    while
  \item[(ii)] uniformly sampling an element of $\calD$ can be done in $O(f(\pw) n^2)$ time.
  \end{itemize}
  Here $n$ and $\pw$ denote the number of vertices and pathwidth of $G_\calD$, respectively.
\end{lthm}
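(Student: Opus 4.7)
The plan is to reduce the problem to the downset counting and sampling algorithms already developed in Corollary~\ref{cor:counting}(\ref{it:downsets}) and Corollary~\ref{cor:sampling}(\ref{it:sample-downsets}), mediated by Birkhoff's representation theorem. First I would invoke Birkhoff's theorem to identify the elements of $\calD$ with the downsets of $P_\calD$ via an order isomorphism. This means that counting elements of $\calD$ is the same as counting downsets of $P_\calD$, and uniformly sampling an element of $\calD$ is the same as uniformly sampling a downset of $P_\calD$ and then returning its image under the isomorphism.

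Next I would argue that the downsets of $P_\calD$ coincide with the downsets of $G_\calD$. The key observation is that a subset $S \subseteq V$ is a downset of a DAG $H$ precisely when for every $v \in S$ and every $u$ reachable to $v$ in $H$, one has $u \in S$; equivalently, $S$ is a downset of $H$ if and only if it is a downset of the transitive closure of $H$. Since $G_\calD$ and $P_\calD$ have identical transitive closures by hypothesis, they induce the same family of downsets. Thus counting and sampling downsets of $G_\calD$ suffices.

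Now I would simply apply the existing algorithms to $G_\calD$ using its given nice path decomposition (which by Corollary~\ref{cor:nice-path} can be computed in FPT linear time from the width-$\pw$ path decomposition of $G_\calD$ in the input, if not already supplied). Corollary~\ref{cor:counting}(\ref{it:downsets}) gives the downset count in $O(2^{\pw} \pw n)$, establishing part (i). For part (ii), Corollary~\ref{cor:sampling}(\ref{it:sample-downsets}) yields a uniform sample in $O(2^{\pw} \pw m n)$ time; I would then use the standard fact that any graph of pathwidth $\pw$ has at most $\pw \cdot n$ edges (each vertex, when inserted into a nice path decomposition, can be adjacent only to the at most $\pw$ vertices already present in its bag), so $m = O(\pw n)$ and the runtime becomes $O(2^{\pw} \pw^2 n^2) = O(f(\pw) n^2)$.

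There is no real obstacle here; the statement is essentially a packaging result. The only point to state carefully is the equality of downsets of $G_\calD$ and $P_\calD$, which is immediate from the definition of a downset in terms of reachability, together with the hypothesis on transitive closures. Everything else is a direct invocation of Birkhoff's theorem and the algorithms of Section~\ref{sec:counting-algorithms} and Section~\ref{sec:sampling-algorithms}.
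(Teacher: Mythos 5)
Your proposal is correct and follows essentially the same route as the paper: Birkhoff's theorem identifies elements of $\calD$ with downsets of $P_\calD$, these coincide with downsets of $G_\calD$ because downsets depend only on the transitive closure, and then the counting and sampling algorithms for downsets (Corollary~\ref{cor:counting}(\ref{it:downsets}) and Corollary~\ref{cor:sampling}(\ref{it:sample-downsets})) give the stated bounds. Your additional observation that $m = O(\pw\, n)$ for graphs of pathwidth $\pw$ is a nice touch that cleanly reconciles the $m$-dependent bound in the sampling corollary with the $O(f(\pw)\, n^2)$ claim.
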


Sampling algorithms are investigated for a variety of reasons.  Propp~\cite{Propp1997-generating}, for example, detected a ``circular" phenomenon in $(a,b,c)$-partitions by generating random tilings. For stable matchings, uniform sampling is a way to obtain a ``fair'' stable matching~\cite{Bhatnagar2008-sampling} as the man-optimal and woman-optimal stable matchings are also woman-pessimal and man-pessimal stable matchings, respectively.  

In a recent paper~\cite{Cheng2020-stable},  we considered stable matching instances with ``$k$-range preferences.''  That is, there is an objective ranking for each group of agents, and each person ranks agents from the other group to within $k$ of their objective ranks.  This model captures the scenario when participants make use of ``official rankings'' to create their preference lists. The smaller the value of $k$, the more faithful the participants' rankings are to the official rankings.  

In general, \emph{every} stable matching instance has $k$-range preferences for some $k \leq n$. Finding the smallest such $k$ can be done in $O(n^2)$ time. We refer to it as the \dft{range of $I$} and denote it as $\range(I)$. We proved the following theorem.  

\begin{lthm}[\cite{Cheng2020-stable}]
  \label{thm:k-range}
  Let $I$ be a stable matching instance with $n$ men and $n$ women.  Suppose $\range(I) = k$. Then $G(I)$ and a path decomposition $\calX$ of $G(I)$  of width $\pw = O(k^2)$ can be computed $O(k^2 n + n^2)$ time.  
\end{lthm}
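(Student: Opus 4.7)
The plan is to split the proof into two stages. First, I would compute Gusfield's digraph $G(I)$ in $O(n^2)$ time using the algorithm of~\cite{Gusfield1987-three}; this step also produces the full list of rotations of $I$ as a by-product. Second, I would exploit the $k$-range assumption to linearly order the vertices of $G(I)$ (i.e., the rotations) so that every arc of $G(I)$ spans at most $O(k^2)$ positions in the ordering. Such a layout directly yields a path decomposition of $G(I)$ of width $O(k^2)$ by taking $X_i$ to be a sliding window of $O(k^2)$ consecutive vertices.

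The heart of the argument is a structural lemma that I would prove first: under the $k$-range assumption, each rotation $\rho = (m_0,w_0),(m_1,w_1),\dots,(m_{t-1},w_{t-1})$ is \emph{localized}, meaning that the objective ranks of the men participating in $\rho$ (equivalently, the women) all lie within an interval of length $O(k)$. The intuition is that in a rotation $m_i$ prefers $w_i$ to $w_{i+1}$ and $w_{i+1}$ prefers $m_{i+1}$ to $m_i$; combined with the hypothesis that each agent's personal list deviates from the objective ranking by at most $k$, consecutive participants cannot be far apart in the objective order, and an inductive argument around the cycle pins the whole rotation into an $O(k)$ window. As a corollary, each agent belongs to only $O(k)$ distinct rotations, since any such rotation is anchored within $O(k)$ objective positions of that agent and only $O(k)$ rotations can be anchored in a window of that length.

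Given localization, I would order the rotations by the minimum objective man-rank among their participants. Whenever $G(I)$ has an arc from $\rho$ to $\rho'$, Gusfield's construction produces this arc because the two rotations share a man or woman (or are linked through an immediate precedence witnessed by a common agent); in either case $\rho$ and $\rho'$ contain agents whose objective ranks differ by only $O(k)$, so the gap between them in the ordering is bounded by the number of rotations anchored in an $O(k)$ window of ranks, which is $O(k^2)$ by the per-agent bound. Defining $X_i$ to be the set of rotations whose ordering index lies in $[i, i+ck^2]$ for an appropriate constant $c$ then satisfies the three conditions of Definition~\ref{dfn:path-decomp} by inspection, and the width is $O(k^2)$.

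The main obstacle will be the localization lemma and the accompanying per-agent rotation bound; these require a careful combinatorial analysis of how a rotation evolves under the rigid $k$-range constraint, ruling out rotations whose participants are spread widely in the objective ordering. Once these structural facts are in place, the rest is a sorting-and-sweeping exercise: sorting the $O(n^2)$ rotations by their minimum participating rank takes $O(n^2)$ time, and emitting the sliding-window bags takes $O(k^2)$ time per window over $O(n)$ windows, for a total of $O(k^2 n + n^2)$ time, matching the bound claimed in Theorem~\ref{thm:k-range}.
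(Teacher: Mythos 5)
You should first note that this paper does not actually prove Theorem~\ref{thm:k-range}: it is imported from the companion paper~\cite{Cheng2020-stable}, so there is no in-paper argument to compare against line by line. Judged on its own terms, your skeleton is the natural one and matches the spirit of the cited work: compute $G(I)$ with Gusfield's $O(n^2)$ construction, show that each rotation is localized in the objective order, bound the number of rotations containing any fixed agent by $O(k)$ (which does follow once each agent has only $O(k)$ possible stable partners, since the rotations containing a man move him through distinct consecutive stable partners), order the rotations so that every arc of $G(I)$ spans $O(k^2)$ positions, and convert that bandwidth-style layout into a sliding-window path decomposition of width $O(k^2)$; the final accounting of $O(k^2 n + n^2)$ is consistent with this plan. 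Your treatment of the type-2 edges of Gusfield's digraph (which connect rotations that need not share an agent, only agents of nearby objective rank) is also handled correctly at the level of the sketch.

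The genuine gap is exactly where you place it, and the justification you offer for it would fail as stated. For the localization lemma you propose to chain the $O(k)$ pairwise rank bounds between consecutive participants ``inductively around the cycle.'' Consecutive men $m_i$ and $m_{i+1}$ of a rotation share the woman $w_{i+1}$, and since $(m_i,w_{i+1})$ and $(m_{i+1},w_{i+1})$ are both stable pairs the $k$-range condition does bound their rank difference by $O(k)$; but chaining such bounds around a cycle with $t$ pairs only confines the participants to a window of size $O(kt)$, because the ranks can drift upward over half the cycle and return over the other half. Nothing in the chaining argument excludes a long rotation whose participants sweep across a wide band of objective ranks, so pinning every rotation into an $O(k)$ window requires a genuinely global structural argument about stable matchings in $k$-range instances (for instance, locality of stable partners combined with a blocking-pair or counting argument showing that two men whose objective ranks differ by more than a constant multiple of $k$ cannot participate in a common rotation). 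That lemma is essentially the entire content of Theorem~\ref{thm:k-range} beyond Gusfield's machinery; until it is proved, your proposal is a plausible outline of the companion paper's approach rather than a proof.
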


\begin{cor}
  \label{cor:sm}
  Let $I$ be a stable matching instance with $\range(I) = k$. Then the number of stable matchings in $I$ can be computed in time $O(2^{O(k^2)} k^2 n + n^2)$, and a uniformly random stable matching can be found in time $O(2^{O(k^2)} k^4 n^2)$. In particular, counting and uniformly sampling stable matchings are fixed parameter tractable when parameterized by the range of the instance.
\end{cor}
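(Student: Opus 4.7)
The plan is to reduce both the counting and sampling problems to counting and sampling downsets of Gusfield's DAG $G(I)$, then combine Theorem~\ref{thm:k-range} with the generic downset algorithms of Corollaries~\ref{cor:counting} and~\ref{cor:sampling}.

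First I would apply Theorem~\ref{thm:k-range} to construct $G(I)$ together with a path decomposition $\calX$ of width $\pw = O(k^2)$ in $O(k^2 n + n^2)$ time, and then refine $\calX$ into a nice path decomposition of the same width via Lemma~\ref{lem:nice-path} at negligible additional cost. Let $N$ denote the number of vertices of $G(I)$; the construction from the companion paper~\cite{Cheng2020-stable} yields $N = O(n)$ for $k$-range instances, and since $G(I)$ has pathwidth $\pw$ on $N$ vertices, it contains $M = O(\pw N) = O(k^2 n)$ edges.

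Next, since stable matchings of $I$ are in bijection with downsets of the rotation poset and, by Gusfield's construction, with downsets of $G(I)$, I would apply Corollary~\ref{cor:counting}(\ref{it:downsets}) to count downsets of $G(I)$ in $O(2^{\pw} \pw N) = O(2^{O(k^2)} k^2 n)$ time; summing with the $O(k^2 n + n^2)$ preprocessing cost yields the claimed counting bound. For sampling, I would apply Corollary~\ref{cor:sampling}(\ref{it:sample-downsets}) to draw a uniformly random downset of $G(I)$ in $O(2^{\pw} \pw M N) = O(2^{O(k^2)} k^4 n^2)$ time, then return its corresponding stable matching via the standard rotation-poset correspondence (which takes polynomial time and is absorbed into the bound).

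There is no single hard step: the corollary is essentially a composition of previously established results, with the substantive work done in Theorem~\ref{thm:k-range} and in the general downset algorithms of Section~\ref{sec:counting-algorithms} and Section~\ref{sec:sampling-algorithms}. The closest thing to an obstacle is arithmetic bookkeeping---verifying that the bounds $\pw = O(k^2)$, $N = O(n)$, and $M = O(k^2 n)$ imported from Theorem~\ref{thm:k-range} and~\cite{Cheng2020-stable} substitute into the generic runtimes of Corollaries~\ref{cor:counting} and~\ref{cor:sampling} to produce exactly the advertised $O(2^{O(k^2)} k^2 n + n^2)$ and $O(2^{O(k^2)} k^4 n^2)$ complexities.
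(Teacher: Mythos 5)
Your proposal is correct and follows essentially the same route as the paper: the corollary is just the composition of Theorem~\ref{thm:k-range} (build $G(I)$ and a width-$O(k^2)$ path decomposition in $O(k^2 n + n^2)$ time) with the downset counting and sampling results of Corollaries~\ref{cor:counting} and~\ref{cor:sampling}, via the bijection between downsets of $G(I)$ and stable matchings. The one small caveat is your assertion that $G(I)$ has $N = O(n)$ vertices for $k$-range instances---the bound one gets from the companion paper is $O(\poly(k)\, n)$ rather than $O(n)$---but any extra $\poly(k)$ factor is absorbed into the $2^{O(k^2)}$ term, so the stated runtimes are unaffected.
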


We note that for fixed $k$, the runtimes above are \emph{linear} in the size of the stable matching instance, as specifying $k$-range preferences for $\Theta(n)$ agents requires $\Theta(n^2)$ bits, even for constant $k$.

\urlstyle{same}
\bibliographystyle{plainnat}
\bibliography{counting-sampling-pathwidth}

\begin{thebibliography}{16}
\providecommand{\natexlab}[1]{#1}
\providecommand{\url}[1]{\texttt{#1}}
\expandafter\ifx\csname urlstyle\endcsname\relax
  \providecommand{\doi}[1]{doi: #1}\else
  \providecommand{\doi}{doi: \begingroup \urlstyle{rm}\Url}\fi

\bibitem[Arnborg et~al.(1991)Arnborg, Lagergren, and Seese]{Arnborg1991-easy}
Stefan Arnborg, Jens Lagergren, and Detlef Seese.
\newblock Easy problems for tree-decomposable graphs.
\newblock \emph{Journal of Algorithms}, 12\penalty0 (2):\penalty0 308–340,
  Jun 1991.
\newblock ISSN 0196-6774.
\newblock \doi{10.1016/0196-6774(91)90006-K}.

\bibitem[Bhatnagar et~al.(2008)Bhatnagar, Greenberg, and
  Randall]{Bhatnagar2008-sampling}
Nayantara Bhatnagar, Sam Greenberg, and Dana Randall.
\newblock Sampling {Stable} {Marriages}: {Why} {Spouse}-swapping {Won}'t
  {Work}.
\newblock In \emph{Proceedings of the {Nineteenth} {Annual} {ACM}-{SIAM}
  {Symposium} on {Discrete} {Algorithms}}, {SODA} '08, pages 1223--1232,
  Philadelphia, PA, USA, 2008. Society for Industrial and Applied Mathematics.
\newblock URL \url{http://dl.acm.org/citation.cfm?id=1347082.1347215}.
\newblock event-place: San Francisco, California.

\bibitem[Birkhoff et~al.(1937)]{Birkhoff1937-rings}
Garrett Birkhoff et~al.
\newblock Rings of sets.
\newblock \emph{Duke Mathematical Journal}, 3\penalty0 (3):\penalty0 443--454,
  1937.

\bibitem[Bodlaender(1996)]{Bodlaender1996-linear-time}
Hans~L. Bodlaender.
\newblock A linear-time algorithm for finding tree-decompositions of small
  treewidth.
\newblock \emph{{SIAM} Journal on Computing}, 25\penalty0 (6):\penalty0
  1305--1317, 1996.
\newblock ISSN 0097-5397.
\newblock \doi{10.1137/S0097539793251219}.
\newblock URL \url{https://epubs.siam.org/doi/abs/10.1137/S0097539793251219}.
\newblock Publisher: Society for Industrial and Applied Mathematics.

\bibitem[Cheng and Rosenbaum(2020)]{Cheng2020-stable}
Christine~T. Cheng and Will Rosenbaum.
\newblock Stable matchings with restricted preferences: Structure and
  complexity.
\newblock submitted, 2020.

\bibitem[Courcelle(1990)]{Courcelle1990-monadic}
Bruno Courcelle.
\newblock The monadic second-order logic of graphs. i. recognizable sets of
  finite graphs.
\newblock \emph{Information and computation}, 85\penalty0 (1):\penalty0 12--75,
  1990.

\bibitem[Curticapean(2019)]{Curticapean2019-counting}
Radu Curticapean.
\newblock Counting problems in parameterized complexity.
\newblock In \emph{13th International Symposium on Parameterized and Exact
  Computation (IPEC 2018)}. Schloss Dagstuhl-Leibniz-Zentrum fuer Informatik,
  2019.

\bibitem[Dyer et~al.(2004)Dyer, Goldberg, Greenhill, and
  Jerrum]{Dyer2004-relative}
Martin Dyer, Leslie~Ann Goldberg, Catherine Greenhill, and Mark Jerrum.
\newblock The {Relative} {Complexity} of {Approximate} {Counting} {Problems}.
\newblock \emph{Algorithmica}, 38\penalty0 (3):\penalty0 471--500, March 2004.
\newblock ISSN 1432-0541.
\newblock \doi{10.1007/s00453-003-1073-y}.
\newblock URL \url{https://doi.org/10.1007/s00453-003-1073-y}.

\bibitem[Eppstein et~al.(2009)Eppstein, Mumford, Speckmann, and
  Verbeek]{Eppstein2009-area}
David Eppstein, Elena Mumford, Bettina Speckmann, and Kevin Verbeek.
\newblock Area-universal rectangular layouts.
\newblock In \emph{Proceedings of the twenty-fifth annual symposium on
  Computational geometry}, pages 267--276, 2009.

\bibitem[Gusfield(1987)]{Gusfield1987-three}
Dan Gusfield.
\newblock Three {Fast} {Algorithms} for {Four} {Problems} in {Stable}
  {Marriage}.
\newblock \emph{SIAM Journal on Computing}, 16\penalty0 (1):\penalty0 111--128,
  February 1987.
\newblock ISSN 0097-5397.
\newblock \doi{10.1137/0216010}.
\newblock URL \url{https://epubs.siam.org/doi/abs/10.1137/0216010}.
\newblock Publisher: Society for Industrial and Applied Mathematics.

\bibitem[Irving and Leather(1986)]{Irving1986-complexity}
Robert~W. Irving and Paul Leather.
\newblock The {Complexity} of {Counting} {Stable} {Marriages}.
\newblock \emph{SIAM Journal on Computing}, 15\penalty0 (3):\penalty0 655--667,
  1986.
\newblock \doi{10.1137/0215048}.
\newblock URL \url{https://doi.org/10.1137/0215048}.
\newblock \_eprint: https://doi.org/10.1137/0215048.

\bibitem[Karlin et~al.(2018)Karlin, Gharan, and Weber]{Karlin2018-simply}
Anna~R. Karlin, Shayan~Oveis Gharan, and Robbie Weber.
\newblock A {Simply} {Exponential} {Upper} {Bound} on the {Maximum} {Number} of
  {Stable} {Matchings}.
\newblock In \emph{Proceedings of the 50th {Annual} {ACM} {SIGACT} {Symposium}
  on {Theory} of {Computing}}, {STOC} 2018, pages 920--925, New York, NY, USA,
  2018. ACM.
\newblock ISBN 978-1-4503-5559-9.
\newblock \doi{10.1145/3188745.3188848}.
\newblock URL \url{http://doi.acm.org/10.1145/3188745.3188848}.
\newblock event-place: Los Angeles, CA, USA.

\bibitem[Khuller et~al.(1993)Khuller, Naor, and Klein]{Khuller1993-lattice}
Samir Khuller, Joseph~(Seffi) Naor, and Philip Klein.
\newblock The lattice structure of flow in planar graphs.
\newblock \emph{SIAM Journal on Discrete Mathematics}, 6\penalty0 (3):\penalty0
  477--490, 1993.
\newblock \doi{10.1137/0406038}.
\newblock URL \url{https://doi.org/10.1137/0406038}.

\bibitem[Lawler(1976)]{Lawler1976-note}
E.~L. Lawler.
\newblock A note on the complexity of the chromatic number problem.
\newblock \emph{Information Processing Letters}, 5\penalty0 (3):\penalty0
  66–67, Aug 1976.
\newblock ISSN 0020-0190.
\newblock \doi{10.1016/0020-0190(76)90065-X}.

\bibitem[Picard and Queyranne(1980)]{Picard1980-structure}
Jean-Claude Picard and Maurice Queyranne.
\newblock On the structure of all minimum cuts in a network and applications.
\newblock In \emph{Combinatorial Optimization II}, pages 8--16. Springer, 1980.

\bibitem[Propp(1997)]{Propp1997-generating}
James Propp.
\newblock Generating random elements of finite distributive lattices.
\newblock \emph{The electronic journal of combinatorics}, 4:\penalty0 R15,
  1--12, 1997.

\end{thebibliography}

\end{document}